\theoremstyle{plain}
\newcommand{\rr}{\mathop{{\rm I}\mskip-4.0mu{\rm R}}\nolimits}
\newcommand{\Z}{\mathop{{\rm Z}\mskip-7.0mu{\rm Z}}\nolimits}
\newtheorem{lemma}{Lemma}
\newtheorem{proposition}{Proposition}
\newtheorem{remark}{\textbf{Remark}}
\newtheorem{assumption}{Assumption}
\newtheorem{definition}{Definition}
\def\tsc#1{\csdef{#1}{\textsc{\lowercase{#1}}\xspace}}
\begin{document}
	\let\WriteBookmarks\relax
	\def\floatpagepagefraction{1}
	\def\textpagefraction{.001}
	
	\shorttitle{A Data-Driven Approach To Preserve Safety and Reference Tracking for Constrained CPS Under Network Attacks}
	\shortauthors{M. Attar et~al.}
	
	\title[mode=title]{A Data-Driven Approach To Preserve Safety and Reference Tracking for Constrained Cyber-Physical Systems Under Network Attacks}
	
	\tnotetext[1]{This work was supported in part by the Natural Sciences and Engineering Research Council of Canada (NSERC).}
	\tnotetext[2]{Mehran Attar and Walter Lucia are with the department of Cybersecurity and Intelligent Systems Engineering (CISE), Concordia University, Montreal, QC, H3G 1M8, Canada. \texttt{\small mehran.attar@concordia.ca}, \texttt{\small walter.lucia@concordia.ca}.}
	
	\author[1]{Mehran Attar}[orcid=0000-0002-4004-0846]
	\author[1]{Walter Lucia}[orcid=0000-0003-3776-8331]
	
	
	\affiliation[1]{
		organization={Cybersecurity and Intelligent Systems Engineering (CISE) department
			, Concordia University},
		city={Montreal},
		state={Quebec},
		country={Canada}
	}
	
	\cortext[cor2]{Corresponding author: \texttt{walter.lucia@concordia.ca}}
	
	\begin{abstract}
		This paper proposes a worst-case data-driven control architecture capable of ensuring the safety of constrained Cyber-Physical Systems under cyber-attacks while minimizing, whenever possible, potential degradation in tracking performance. To this end, a data-driven robust anomaly detector is designed to detect cyber-attack occurrences. Moreover, an add-on tracking supervisor module allows safe open-loop tracking control operations in case of unreliable measurements. On the plant side, a safety verification module and a local emergency controller are designed to manage severe attack scenarios that cannot be handled on the controller's side. These two modules resort to worst-case reachability and controllability data-driven arguments to detect potential unsafe scenarios and replace, whenever strictly needed, the tracking controller with emergency actions whose objective is to steer the plant's state trajectory in a predefined set of admissible and safe robust control invariant region until an attack-free scenario is restored. The effectiveness of the proposed solution has been shown through a simulation example.
	\end{abstract}
	
	
	
	\begin{keywords}
		Cyber-physical systems \sep data-driven control \sep reachability analysis \sep safety
	\end{keywords}

	\maketitle
	
	\section{Introduction}\label{sec:introduction}
	Cyber-Physical Systems (CPSs) are advanced engineering systems that closely integrate computation and communication technologies. Owing to their enhanced capabilities compared to traditional systems, CPSs have swiftly been adopted across various sectors, including water treatment, energy management, aerospace, and manufacturing. 
n this context, recent studies on networked control systems have investigated how network-induced limitations affect control performance~\cite{YANG2025112120, rsetam2025gpio} . Beyond such communication-efficiency challenges, the reliance of CPSs on network communications also introduces significant security and privacy concerns, particularly vulnerability to cyber-attacks
		\cite{cardenas2008secure, zha2025privacy}. Therefore, in the research community, a lot of attention has been given to developing control solutions to detect and identify attacks, mitigate their presence and preserve the plant’s safety and privacy, see, e.g., \cite{farokhi2019ensuring, attaractive, dibaji2019systems}. 
	Recently, there has been a growing trend in constrained CPS to explicitly address safety concerns. In \cite{gheitasi2021safety,zhang2020reachability}, by resorting to reachability analysis and set-theoretic concepts, authors have proposed a control architecture to preserve the safety of the plant. 
	The authors in \cite{franze2023cyber} presented an adaptive Model Predictive Control (MPC) framework capable of addressing security challenges in constrained networked control systems for various types of cyber-attacks. 
	In \cite{franze2020resilience}, a distributed control architecture is proposed for discrete-time linear time-invariant multi-agent networked systems under replay attacks. 
	%
	In \cite{escudero2023safety}, a set-theoretic method has been developed to synthesize optimal LTI filters that constrain control inputs, preventing the reachability of unsafe states caused by resource-limited actuator or sensor attacks.	
	In \cite{wei2024resilient}, a
	distributed MPC and attack detection framework is proposed for constrained linear multi-agent systems under adversarial attacks.  In \cite{gheitasi2022worst}, a modular architecture capable of preserving the plant's safety while minimizing tracking performance loss has been designed for CPSs subject to state and input constraints.

	All the above mentioned contributions are derived assuming an accurate a-priori knowledge of the system's dynamical model. However, obtaining an accurate mathematical model can be challenging, especially when the system's behavior is influenced by unknown or uncertain factors such as disturbances \cite{yang2021scalable}.
	Consequently, recent efforts have focused on developing data-driven control architectures to safeguard the safety of constrained CPSs against cyber-attacks. For instance, in \cite{attar2024safety}, the authors propose a solution utilizing data-driven set-theoretic concepts to ensure the safety of a constrained CPS.  In \cite{gao2022resilient}, a resilient reinforcement learning approach has been developed to deal with partially linear systems subject to Denial-of-service (DoS) attacks. In \cite{liu2022data}, a data-driven MPC approach has been developed to compute input sequences and predicted outputs obtained from convex programming programs based on pre-collected input-output data. Building on this, a data-driven resilient controller is introduced, ensuring local input-to-state stability under specific denial-of-service attacks and noise levels.
	%
	%

	
	\subsection{Contributions} \label{sec:contributions}
	
	The above-described state-of-the-art show that most of the existing results on the data-driven design of safety-preserving control architectures neglect the tracking performance degradation problem under False Data Injection (FDI) attacks and, to the best of the author's knowledge, such a problem has not yet been explored in the literature.  
	%
	Consequently, this paper goes in the direction of filling the existing gap, developing a solution that tries to minimize, whenever possible and in a data-driven fashion, the tracking performance loss under cyber-attacks. {In particular, we here develop a control architecture that leverages the data-driven results developed in  \cite{alanwar2021data, de2019formulas} to extend the model-based solution in \cite{gheitasi2022worst} and \cite{attar2024safety} to develop a novel data-driven control architecture that is capable of preserving the safety of the plant while minimizing, whenever possible, the tracking performance loss due to cyber-attacks on the communication channels. Consequently, the main novelty lies in integrating different data-driven modules (namely safety verification, emergency controller, and tracking supervisor) into a unified worst-case framework/architecture that preserves safety while reducing tracking degradation under arbitrary FDI attacks. With respect to the existing state-of-the-art, the main contributions of this work can be summarized as follows:}
	\begin{itemize}
		%
		\item {While the contributions in \cite{gao2022resilient,liu2022data}
			focus on safety and resilience under DoS attacks of finite
			time and frequency, we here deal with unconstrained FDI
			attacks that can affect both measurement and actuation
			channels with arbitrary duration and frequency.}
		\item {Existing data-driven solutions under FDI attacks mainly focus on preserving the plant's safety; see, e.g.,
			\cite{escudero2023safety, attar2024safety, wabersich2021predictive}. On the
			other hand, to the best of the authors' knowledge, the
			proposed solution defines one of the first data-driven
			approaches capable of ensuring safety while explicitly
			minimizing the tracking performance loss during FDI
			attacks.}
		\item Differently from \cite{gheitasi2022worst}, we do not require that attacks be detected instantaneously. Consequently, a more general setup that considers attack detection delays is developed. {Technically, this leads to a new formulation
			of the robust state predictions performed by the tracking
			supervisor module, which allows supervised open-loop
			tracking from the last reliable measurement until such an
			operation is no longer deemed safe or beneficial for
			tracking.}
		\item The proposed data-driven emergency controller makes use of data-driven robust backward reachable sets, whose computation is not available in \cite{alanwar2021data}. Moreover, differently from  \cite{gheitasi2021safety}, robust backward reachable sets are not model-based and are defined on an augmented state-space description that addresses computational issues specific to the data-driven formulation (not of concern in \cite{gheitasi2021safety}).
		\item The proposed emergency controller extends the data-driven predictive controller introduced in \cite{attar2023data}. Particularly, unlike \cite{attar2023data}, this work incorporates the design of $L$ Voronoi partitions of $\mathcal{X}_{\eta}$, state feedback controllers with associated RCI regions, and a family of ROSC sets. These enhancements aim to mitigate the tracking performance loss compared to the method presented in \cite{attar2024safety}.
	\end{itemize}

	A Matlab implementation of the algorithm developed in this paper is available on the following GitHub page:
	\href{https://github.com/PreCyseGroup/data-driven-tracking-supervisor}{https://github.com/PreCyseGroup/data-driven-tracking-supervisor}.
	

	\section{Preliminaries and problem formulation}\label{sec:preliminaries}
	
	Denote with $k\in \Z_+=\{0,1,\ldots\}$ a discrete-time index, and consider the discrete-time Linear Time-Invariant (LTI) system 
	\begin{equation}\label{LTI-for-definitions}
		z_{k+1}=\Phi z_k + G \mu_k + p_k,\quad z_k\in \mathcal{Z},\, \mu_k\in \mathcal{U}_{\mu},\, p_k\in \mathcal{P},
	\end{equation}
	where $p_k$ is an unknown but bounded process disturbance and
	$\mathcal{Z},$  $\mathcal{U}_{\mu},$ and $\mathcal{P}$
	are compact sets.

	{
		{\definition \label{def:RPI_set}
		A set $\mathcal{T} \subseteq \mathcal{Z}$ is called Robust Positive Invariant (RPI) for the autonomous system 
		$z_{k+1}=\Phi z_k + p_k$  if $\forall z \in \mathcal{T}, \Phi z  + p \in \mathcal{T}, \forall p \in \mathcal{P}$ \cite[Definition 11.20]{borrelli2017predictive}.}
	}
	
	{\definition \label{def:RCI_set}
		A set $\mathcal{T} \subseteq \mathcal{Z}$ is called Robust Control Invariant (RCI) for \eqref{LTI-for-definitions} if $\forall z \in \mathcal{T}, \exists \mu \in \mathcal{U}_{\mu}: \Phi z + G \mu + p \in \mathcal{T}, \forall p \in \mathcal{P}$ \cite[Definition 11.22]{borrelli2017predictive}.}

		{
		\begin{definition}\label{def:model_based_controllable_sets}
			Consider the LTI system \eqref{LTI-for-definitions}  and a target set $\mathcal{C}_i \subseteq \mathcal{Z}.$ The set of states $\mathcal{C}_{i+1} \subseteq \mathcal{Z}$ Robust One Step Controllable  (ROSC) to $\mathcal{C}_i$ is \cite{blanchini2008set}:
			\begin{equation}\label{eq:ROSC-set}
				\mathcal{C}_{i+1}\!=\! \{z \in \mathcal{Z}: \!\exists \mu \in \mathcal{U}_{\mu}\!:\! \Phi z + G \mu + p\in \mathcal{C}_i, \forall p \in  \mathcal{P} \}.
			\end{equation}
		\end{definition}
	}


{
\begin{definition}\label{def:RORS}
	Considering the LTI system  \eqref{LTI-for-definitions} and a set   $\mathcal{R}_i\subset \mathcal{Z}.$ The set of states $\mathcal{R}_{i+1}$ Robust One Step Reachable (RORS) from $\mathcal{R}_{i}$ is \cite[Section 11.3]{borrelli2017predictive}: 
		\begin{equation}\label{eq:RORS_model}
		\mathcal{R}_{i+1}\!=\!\{ z\!:\! \exists \bar{z}\in \mathcal{R}_i, \mu \in \mathcal{U}_{\mu},  p \!\in \! \mathcal{P}: z\! = \Phi \bar{z} + G \bar{\mu} + p \}
	\end{equation}
\end{definition}
}
	
	%
	%
	{
		\definition \label{def:safety_def} 
		At time $k\geq 0$, the state-control pair $(z_k,\mu_k)$ of system~\eqref{LTI-for-definitions} is said to be safe if
		$z_k\in\mathcal{Z}$, $\mu_k\in\mathcal{U}_{\mu}$, and
		$
		\Phi z_k+G\mu_k+p \in \mathcal{Z}, \, \forall p\in\mathcal{P}.
		$
		Otherwise, the pair $(z_k,\mu_k)$ is said to be unsafe.
	}
	
	Of interest for this paper are networked control systems setups where the plant and the tracking controller are spatially distributed and a communication medium is used to exchange state measurements and control inputs. By considering the possibility of cyber-attacks on the communication channels, we also assume that an anomaly detector is implemented local to the controller. In particular, the assumed plant model, controller, cyber-attack actions, and anomaly detector logic can be formalized as follows.
	\subsection{Plant model and safety constraints} \label{sec:plant_model}
	We consider plants whose dynamic evolution can be described by means of the following LTI system
	\begin{equation}\label{eq:linear_system}
		x_{k+1} = Ax_k + Bu_k + w_k,
	\end{equation}
	where $x_k \in \rr^n, u_k \in \rr^m$ and $A, B$ are the unknown system matrices and $w_k$ is a bounded disturbance that lies into a known compact set $\mathcal{W}\subset\rr^n.$ Due to physical limitations and safety reasons, the following set-membership constraints are prescribed:
	\begin{equation}
		x_k \in \mathcal{X}\subset \rr^n,\quad u_k \in \mathcal{U}\subset \rr^m,\quad  \label{eq:constraints}
	\end{equation}
	with $\mathcal{X}\subset \rr^n,\,\mathcal{U}\subset \rr^m, \mathcal{W} \subset \rr^n$ compact sets described by means of Zonotopes containing the origin.
	
	\begin{assumption} \label{assumption:data_condition}
		The  matrices $A,$ $B$ of \eqref{eq:linear_system} are unknown.
		Moreover, a collection of $N_t>0$ noisy input-state trajectories is available,
		\begin{equation}\label{eq:available_trajectories}
			\left\{\left\{u^{(i)}_k\right\}^{N_{s}^{(i)}-1}_{k=0},\,  \left\{x^{(i)}_k\right\}^{N_{s}^{(i)}-1}_{k=0}\right\}_{i=1}^{N_t},
		\end{equation}
		where $N_{s}^{(i)}>0$ is the number of samples in each trajectory {and $N_D := \sum_{i=1}^{N_t} N_s^{(i)}$}. 
		By arranging the collected data into two matrices { $X_{-}\in \rr^{n \times N_D},$ $U_{-}\in \rr^{m \times N_D},$ }where 
		\begin{eqnarray}
			\!\!\!\!X_{-}\!\!\!\!\!\!&=&\!\!\!\!\!\! \left[
			x^{(1)}_0, \cdots, x^{(1)}_{N_{s}^{(1)}-1}, \cdots, x^{(N_t)}_0, \cdots, x^{(N_t)}_{N_{s}^{(N_t)}-1}\right], \\
			\!\!\!\!U_{-}\!\!\!\!\!\!&=&\!\!\!\!\!\!\left [
			u^{(1)}_0, \cdots, u^{(1)}_{N_{s}^{(1)} -1}, \cdots, u^{(N_t)}_0, \cdots, u^{(N_t)}_{N_{s}^{(N_t)} -1}
			\right ]. 
		\end{eqnarray}
		We assume that the matrix $\begin{bmatrix}
			X_-^T & U_-^T
		\end{bmatrix}^T$ has full row rank:
		\begin{eqnarray}\label{eq:rankCondition}
			&& \text{rank}(\begin{bmatrix}
				X_-^T & U_-^T
			\end{bmatrix}^T)=n+m \label{eq:rank_condition}
		\end{eqnarray}
		\hfill $\Box$
	\end{assumption}
	\begin{remark}
		\label{remark:data_driven_representation}Condition~\eqref{eq:rank_condition} ensures that the collected data \eqref{eq:available_trajectories} have been obtained for sufficiently persistent exciting input sequences \cite{de2019formulas}.
		In what follows, the one-step ahead shifted representation of $X_-$  is denoted as { $X_{+}\in \rr^{n \times N_D},$} where
		\begin{eqnarray}
			X_{+} \!\!\!\!\!\!&=&\!\!\!\!\!\!\left [
			x^{(1)}_1, \cdots, x^{(1)}_{N_{s}^{(1)}}, \cdots, x^{(N_t)}_1, \cdots, x^{(N_t)}_{N_{s}^{(N_t)}}\right ].
		\end{eqnarray}
	\end{remark}
	\subsection{Tracking controller}\label{sec:TC_controller}
	
	{The plant's state  $x_k$ is required to track a reference signal $r,$ with $r_k\in \rr^n$  constrained in a compact set $\mathcal{R},$ i.e. $r_k\in \mathcal{R} \subset  \rr^n,$ $\forall\,k\geq 0,$}  Moreover, it is assumed that a data-driven networked  tracking controller, obtained using the available data \eqref{eq:available_trajectories},  is available and it is described by the following model:
	\begin{equation} \label{eq:tracking_controller_logic}
		u_k = \eta(x_k,r_k), 
	\end{equation}
	where  
	$\eta(\cdot,\cdot): {\rr^n \times \rr^n }\rightarrow \rr^m$ is the networked tracking controller logic.
	\begin{assumption} \label{assumption:control_center} The networked tracking controller \eqref{eq:tracking_controller_logic} is given, and in the absence of attacks, it ensures that the plant's constraints \eqref{eq:constraints} are satisfied regardless of any realization of the admissible disturbance $\mathcal{W}$ and any admissible bounded reference signal.  For example, such a controller can be designed resorting to the solutions developed in \cite{de2019formulas, 6339058, 7932940}. {he resulting closed-loop system is robust positive invariant
			 is a known or estimated set 
			  $\mathcal{X}_{\eta}\subseteq \mathcal{X}$ for any $r_k \in \mathcal{R}$ and $w_k \in \mathcal{W},$  for any $k\geq 0.$  
			The set $\mathcal{X}_{\eta}$ should be interpreted  as a data-driven certified admissible robust invariant region for the closed-loop system \cite{RCI_LPV20232}.}
		
	\end{assumption}
	\subsection{Networked cyber-attacks}\label{sec:system_under_attack}
	
	We assume that the communication channels between the plant and the controller are vulnerable to FDI attacks. Consequently, the closed-loop evolution of \eqref{eq:linear_system} under FDI attacks on both the actuation and measurement channel is 
	\begin{equation}\label{eq:system_under_attack}
		x_{k+1} = Ax_k + Bu'_k + w_k, \quad u_k = \eta(x'_k,r_k), 
	\end{equation}
	where $u'_k:=u_k + u^a_k, x'_k:=x_k + x^a_k$, with $u^a_k\in \rr^m$ and $x^a_k\in \rr^n$ the vectors injected by the attacker. Moreover, the attacker's injections can have any frequency and duration.
	\subsection{Anomaly detector} \label{sec:anomaly_detector}
	
	We assume that a passive data-driven binary anomaly detector is available on the controller's side to detect the presence of cyber-attacks by leveraging the received state measurements $\{x'_{t}\}_{t=0}^k$ and computed control inputs $\{u_{t}\}_{t=0}^k$. The following model abstractly describes the anomaly detector
	\begin{equation} \label{eq:anomaly_detector}
		d_k = \text{\textit{D}}\left(\{x'_{t}\}_{t=0}^k, \{u_{t}\}_{t=0}^{k-1}, \mathcal{W}\right),
	\end{equation}
	where $D(\cdot,\cdot, \cdot)$ is the detection logic and $d_k\in[0, 1].$ In what follows, we will assume that $d_k=1$ denotes the presence of an anomaly. Moreover, it is assumed that the detection mechanism is capable of providing attack detection with an estimated bounded delay $0 \leq \tau < \infty$  \cite{shi2017causality,kurt2018distributed} that we assume is estimated experimentally, evaluating the performance of \eqref{eq:anomaly_detector} for different attack scenarios.
	{Note that the anomaly detector \eqref{eq:anomaly_detector} does not ensure the absence of undetectable attacks.   However,  this is not a limitation of the used detector but a known issue; it has been proved in  \cite{smith2015covert} that, for example, covert attacks cannot be detected by any actions taken only on the controller's side. 	Consequently, additional actions must be taken on the plant's side to ensure the detection of any class of intelligent coordinated attacks.  In particular, such a problem is hereafter addressed in Section~\ref{sec:safety_verification} , when the safety verification module is introduced.}
	
	
	\subsection{Problem Formulation} \label{sec:problem_formulation}
	
	In the considered networked architecture, we assume that only the networked controller is aware of the reference signal $r_k$. Therefore, 
	it is acceptable to experience performance degradation during cyber-attacks as long as the plant’s safety and recovery (after the attack) are guaranteed. However, it is also desirable to design the control architecture to minimize the tracking performance degradation due to cyber-attacks. Consequently, the problem of interest can be stated as follows:

	{\it Under Assumptions \ref{assumption:data_condition}-\ref{assumption:control_center}, design a data-driven control architecture for the constrained system  \eqref{eq:linear_system}- \eqref{eq:constraints}  ensuring (i) plant's safety and tracking performance loss minimization during any cyber-attack, and (ii) performance recovery in the post-attack phase.}
	%
	
	
	\section{Proposed Solution} \label{sec:proposed_solution}
	
	To design the proposed data-driven solution, we consider a worst-case scenario in which a cyber-attack can affect both communication channels, either simultaneously or at different times. 
	If only safety is of interest, then a data-driven solution to the problem can be obtained by adapting the strategy developed in \cite{attar2024safety}. Indeed, upon detecting an attack, such a strategy prescribes disconnecting the networked tracking controller and activating a local emergency controller for safety-preserving purposes. The drawback of such a solution is that it jeopardizes the tracking task regardless of the nature of the attack, producing an overly conservative mitigation strategy that ultimately results in poor tracking performance.
	On the other hand, in the here proposed solution, we extend the approach in \cite{attar2024safety}
	with the aim of minimizing, whenever possible,  the tracking performance loss caused by cyber-attacks. In particular, we argue that tracking performance degradation can be minimized by implementing targeted measures to counteract attacks on the actuation and measurement channels. To this end, we enhance the networked control scheme shown in Fig.~\ref{fig:proposed_architecture} with four additional modules:
	\begin{itemize}
		\item A Safety Verification (\textit{SV}) subsystem, local to the plant, whose  objective is to prevent the plant from reaching unsafe configurations; {This module makes use of RORS (as in Definition~\ref{def:RORS}) to check for potential reachable unsafe conditions (as in Definition~\ref{def:safety_def})}. 
		\item An Emergency Controller (\textit{EC}), local to the plant, and it is used when the control signal $u'_k$ is deemed untrustworthy by the safety verification module. {The EC is built using the set-theoretic concepts of RCI set (as in Definition \ref{def:RCI_set})  and ROSC set (as in Definition~\ref{def:model_based_controllable_sets})  to confine, in a finite number of steps, the trajectory on a chosen safe equilibrium point}
		%
		\item An Anomaly Detector (\textit{D}) in charge of detecting the presence of attacks. 
		\item A Tracking Supervisor (\textit{TS}), local to the tracking controller, responsible for minimizing the tracking performance loss while ensuring safety when $x^{'}_k$ is invalid. This module is activated when the Anomaly Detector detects the presence of a cyber-attack (i.e., if  $d_k=1$) {and it will make use of a tracking performance index based on the overlap of computed RORS sets and a Voronoi partition of the admissible state-space.}
	\end{itemize}
	\begin{figure}[htbp]
		\centering
		\includegraphics[width=\columnwidth]{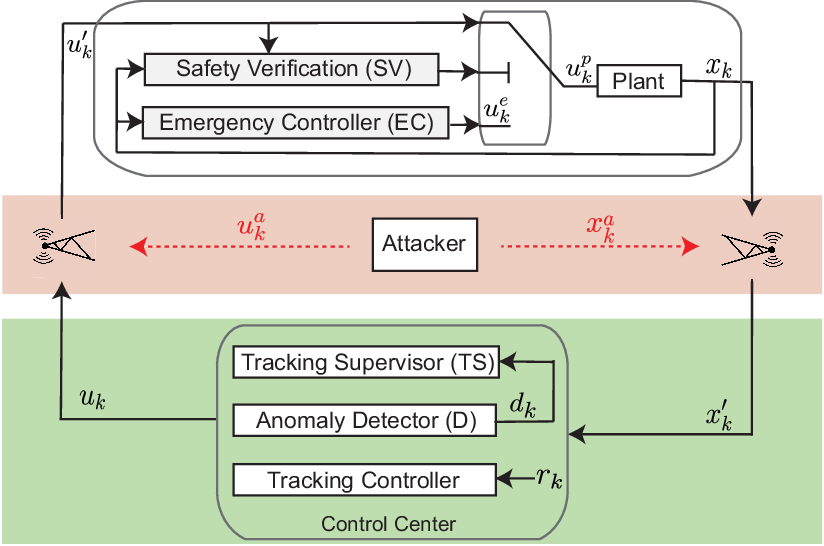}
		\caption{Proposed Control Architecture}
		\label{fig:proposed_architecture}
	\end{figure}

	\begin{remark}\label{remarl:justify_architecture}
		In various CPS application domains (such as the Smart Grid), the networked controller manages and coordinates multiple subsystems that collaborate toward a common objective. In this context, the networked controller is uniquely aware of their reference signals $r_k$. Thus, the state-feedback emergency controller shown in Fig.~\ref{fig:proposed_architecture} cannot replace the remote tracking controller; instead, whenever used, it temporarily suspends tracking to preserve safety.
	\end{remark}
	%
	\subsection{Safety Verification Module} \label{sec:safety_verification} 
	
	This module aims to prevent the plant from reaching unsafe configurations (see Definition~\ref{def:safety_def}). Consequently, it complements the anomaly detection rules \eqref{eq:anomaly_detector}, and it ensures that any attack is detected at least one step before it could affect the plant's safety.  In particular, given the received $u_k'$ and the local available measurement of $x_k,$ the safety module checks the following possible anomalies: 
	\begin{equation} \label{eq:safety_one_step_ev}
		u_k'\notin \mathcal{U},  \quad  \quad \mathcal{S}^{+}_k \not \subseteq \mathcal{X}_{\eta},
	\end{equation}
	where  $\mathcal{S}^{+}_k$ denotes the RORS set starting from $x_k$ and $u_k'.$ Therefore, $\mathcal{S}^{+}_k \not \subseteq \mathcal{X}_{\eta}$ represents a situation where exists a disturbance realization such that $x_{k+1}$ does not fulfill the constraints. Consequently, the safety verification logic can be summarized as follows
	\begin{itemize}
		\item \textbf{If} $u_k'\notin \mathcal{U}$ or $\mathcal{S}^{+}_k \not \subseteq \mathcal{X}_{\eta}, $
		\textbf{then} the $u_k'$ has been corrupted, and the \textit{EC} is activated.
		\item \textbf{Else} the  $u_k'$ is deemed safe and applied to the plant. 
	\end{itemize}

	If the model of the plant is known then $\mathcal{S}_k^+:=Ax_k \oplus Bu'_k \oplus \mathcal{W}$. However, based on Assumption~\ref{assumption:data_condition}, the system matrices $A, B$ are unknown. Therefore, the forward one-step evolution of the system cannot be directly derived from \eqref{eq:RORS_model}. Thus, in what follows,  by adopting the data-driven solution developed in \cite{ alanwar2021data,koch2021provably}, we compute an outer approximation of $\mathcal{S}^{+}_k,$ namely $\hat{\mathcal{S}}^{+}_k,$ such that $\hat{\mathcal{S}}^{+}_k \supseteq \mathcal{S}^{+}_k.$  For the sake of completeness, the following two lemmas summarize how $\hat{\mathcal{S}}^{+}_k$ can be computed in a data-driven fashion.
	\begin{lemma}{}\label{lemma:noise_zonotope} \it \cite[Lemma 1]{alanwar2021data}
		Let $T =\displaystyle \sum_{i=1}^{N_t} N^{(i)}_s$ and consider the following concatenation of multiple noise zonotopes 
		$$\mathcal{M}_w=\mathcal{M}_w(C_w, [G^{(1)}_{{M}_w}, \ldots, G^{(qT)}_{{M}_w}]),
		$$ 
		where
		$C_w\in \rr^{n\times (n+m)}=[c_{w},\ldots,\,c_w]$, and 
		{$G_{M_w}\!\!\!\!\!\!:=\!\!\!\!\!\left[
		G_{M_w}^{(1)},\ldots,G_{M_w}^{(qT)}
		\right]\!\!\in\!\! \rr^{n\times T(n+m)}$} is built $\forall\,i \in \{ 1, \ldots, q\},\, \forall\, j \in \{2, \ldots, T-1\}$ as 
		\begin{equation}
			\begin{array}{rcl}
				G^{(1+(i-1)T)}_{{M}_w} &=& \begin{bmatrix}
					g^{(i)}_{w} & 0_{n \times (T-1)}
				\end{bmatrix},
				\\
				G^{(j+(i-1)T)}_{{M}_w} &=& \begin{bmatrix}
					0_{n \times (j-1)} & g^{(i)}_{w} & 0_{n \times (T-j)}
				\end{bmatrix},
				\\
				G^{(T+(i-1)T)}_{{M}_w} &=& \begin{bmatrix}
					0_{n \times (T-1)} & g^{(i)}_{w}
				\end{bmatrix}.
			\end{array}
		\end{equation}
		Then, the matrix zonotope
		\begin{equation}\label{eq:compute_Mzono_AB}
			\begin{array}{rcl}
				
				\mathcal{M}_{{A} {B}}\!\!\!\!\!\!&=&\!\!\!\!\!\! (X_+ - \mathcal{M}_w) \begin{bmatrix}
					X_- \\ U_-
				\end{bmatrix}^{\dagger}   
				\\
				\!\!\!& := &\!\!\!\!\!\! \{[\hat{A},\, \hat{B}]: 
				[\hat{A},\, \hat{B}]\! =\! C_{AB} + \displaystyle\sum_{i=1}^{T} \beta^{(i)}G^{(i)}_{M_{AB}},\\
				&&
				-1\leq \beta^{(i)} \leq 1 \},
			\end{array}
		\end{equation}
		where $\left [ \cdot \right]^{\dagger}$ is the right pseudo inverse operator and
		$$
		\begin{array}{c}
			C_{AB} = (X_{+}-C_{w})\left(
			[X^T_{-},\, U^T_{-}]    ^{T}\right)^{\dagger},\\
			G_{M_{AB}} =\left[
			G^{(1)}_{M_w}
			\left(
			[X^T_{-},\, U^T_{-}]    ^{T}\right)^{\dagger}\!\!, \ldots, G^{(qT)}_{M_w}\left(
			[X^T_{-},\, U^T_{-}]    ^{T}\right)^{\dagger}
			\right],
		\end{array}
		$$
		contains the set of all system matrices $[\hat{A},\, \hat{B}]$ consistent with \eqref{eq:available_trajectories} and {$\mathcal{W}$} and such that $[A,B]\in \mathcal{M}_{AB}.$ $\hfill\square$
	\end{lemma}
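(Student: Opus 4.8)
\emph{Proof idea.} The plan is to show that the true (unknown) pair $[A,\,B]$ is forced, by the recorded data together with the noise bound, to coincide with a specific element of the matrix zonotope $\mathcal{M}_{AB}$, using the rank condition~\eqref{eq:rank_condition} to invert the data matrix. First I would instantiate the dynamics~\eqref{eq:linear_system} along every recorded sample: for each trajectory $i$ and time $k$ appearing in~\eqref{eq:available_trajectories} one has $x^{(i)}_{k+1}=Ax^{(i)}_k+Bu^{(i)}_k+w^{(i)}_k$ with $w^{(i)}_k\in\mathcal{W}$. Collecting these identities column-wise, in the same order used to build $X_-,\,U_-,\,X_+$, gives the matrix relation
\[
X_+=\begin{bmatrix}A & B\end{bmatrix}\begin{bmatrix}X_-\\U_-\end{bmatrix}+W,
\]
where $W\in\rr^{n\times T}$ stacks the corresponding disturbance realizations, so that every column of $W$ lies in $\mathcal{W}$.

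Second I would verify that $W\in\mathcal{M}_w$. Writing $\mathcal{W}=\{c_w+\sum_{i=1}^{q}\beta_i g^{(i)}_w:\ \beta_i\in[-1,1]\}$, the interleaved block pattern of the generators $G^{(\cdot)}_{M_w}$ is constructed precisely so that, for a generic element of $\mathcal{M}_w$ obtained with coefficients $\beta^{(\cdot)}\in[-1,1]$, its $j$-th column equals $c_w+\sum_{i=1}^{q}\beta^{(j+(i-1)T)}g^{(i)}_w$, i.e.\ an arbitrary point of $\mathcal{W}$, while the coefficients acting on distinct columns carry distinct indices and are hence mutually free. Therefore $\mathcal{M}_w$ equals the set of all $n\times T$ matrices whose columns belong to $\mathcal{W}$; in particular $W\in\mathcal{M}_w$.

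Third I would combine the two facts. From the matrix relation, $\begin{bmatrix}A & B\end{bmatrix}\begin{bmatrix}X_-\\U_-\end{bmatrix}=X_+-W$; right-multiplying by $\begin{bmatrix}X_-\\U_-\end{bmatrix}^{\dagger}$ and invoking~\eqref{eq:rank_condition}, which makes this a genuine right inverse so that $\begin{bmatrix}X_-\\U_-\end{bmatrix}\begin{bmatrix}X_-\\U_-\end{bmatrix}^{\dagger}=I_{n+m}$, yields
\[
\begin{bmatrix}A & B\end{bmatrix}=(X_+-W)\begin{bmatrix}X_-\\U_-\end{bmatrix}^{\dagger}.
\]
Since $W\in\mathcal{M}_w$, the right-hand side is a member of $(X_+-\mathcal{M}_w)\begin{bmatrix}X_-\\U_-\end{bmatrix}^{\dagger}=\mathcal{M}_{AB}$, so $[A,\,B]\in\mathcal{M}_{AB}$. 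Repeating the same equalities for any pair $[\hat A,\,\hat B]$ that reproduces the data with a disturbance matrix whose columns lie in $\mathcal{W}$ shows that $\mathcal{M}_{AB}$ contains every consistent pair. Finally I would check the closed forms for $C_{AB}$ and $G_{M_{AB}}$: subtracting the matrix zonotope $\mathcal{M}_w$ from the constant matrix $X_+$ translates the center to $X_+-C_w$ and leaves the generators (up to irrelevant signs), while right-multiplication by the fixed matrix $\begin{bmatrix}X_-\\U_-\end{bmatrix}^{\dagger}$ acts linearly on center and on each generator, giving exactly the stated $C_{AB}$ and $G_{M_{AB}}$.

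The only delicate step is the second one: one must be sure that the particular interleaving in $G_{M_w}$ realizes the full column-wise Cartesian power $\mathcal{W}\times\cdots\times\mathcal{W}$, i.e.\ that it neither introduces spurious coupling between columns nor excludes any admissible disturbance matrix. The remaining steps are routine manipulation of right pseudoinverses and of linear images of (matrix) zonotopes.
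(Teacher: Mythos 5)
Your argument is correct and is essentially the same one underlying the cited result: the paper itself gives no proof of this lemma, importing it verbatim from \cite[Lemma 1]{alanwar2021data}, and your three steps (column-wise data equation $X_+=[A\;B][X_-^T\;U_-^T]^T+W$, the observation that the interleaved generators make $\mathcal{M}_w$ exactly the set of matrices with all columns in $\mathcal{W}$, and right-inversion via the rank condition~\eqref{eq:rank_condition}) reconstruct that reference's proof faithfully, including the correct one-directional reading of the claim as an over-approximation of the consistent set. The ``delicate step'' you flag is indeed the crux, and your resolution of it --- distinct coefficient indices for distinct columns, hence mutually free columns each ranging over $\mathcal{W}$ --- is exactly why the construction works.
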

	\begin{lemma} (Adapted from \cite[Theorem 1]{alanwar2021data})\label{lemma:over_approx_forward_sets}
		\it 
		The set $\hat{\mathcal{S}}^+_k\subset \rr^n,$ computed as 
		\begin{equation}
			\label{eq:over_approximation_of_one_step_evolution}
			\hat{\mathcal{S}}^+_k = \mathcal{M}_{AB}[x_k,u'_k]^T \oplus \mathcal{W},  
		\end{equation}
		is a conservative outer approximation of ${\mathcal{S}}^+_k$ and $\oplus$  is the Minkowski set sum operator. 
	\end{lemma}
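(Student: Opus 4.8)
The plan is to prove the claimed inclusion $\mathcal{S}^+_k \subseteq \hat{\mathcal{S}}^+_k$ (this is precisely what ``conservative outer approximation'' means here; the right-hand side of the claimed approximation should be read as $\mathcal{S}^+_k$ rather than $\hat{\mathcal{S}}^+_k$). The argument rests on a single structural fact inherited from Lemma~\ref{lemma:noise_zonotope}: the matrix zonotope $\mathcal{M}_{AB}$ is, by construction, a set of pairs $[\hat{A},\,\hat{B}]$ that is guaranteed to contain the true unknown pair, i.e. $[A,\,B]\in\mathcal{M}_{AB}$. Nothing else about how the generators of $\mathcal{M}_{AB}$ are built is needed.

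First I would make precise what $\mathcal{S}^+_k$ is: by Definition~\ref{def:RORS} specialized to the single true state $x_k$ and the applied (possibly corrupted) input $u'_k$, we have $\mathcal{S}^+_k = A x_k \oplus B u'_k \oplus \mathcal{W} = \{A x_k + B u'_k + w : w\in\mathcal{W}\}$. Next, for the fixed vector $\zeta_k := [x_k^T,\,(u'_k)^T]^T$, I would observe that right-multiplying the matrix zonotope by $\zeta_k$ yields again a zonotope $\mathcal{M}_{AB}\,\zeta_k$, with center $C_{AB}\,\zeta_k$ and generators $G^{(i)}_{M_{AB}}\,\zeta_k$; evaluating the generator expansion in \eqref{eq:compute_Mzono_AB} at the coefficients $\beta^{(i)}$ that realize $[A,\,B]\in\mathcal{M}_{AB}$ then shows $A x_k + B u'_k = [A,\,B]\,\zeta_k \in \mathcal{M}_{AB}\,\zeta_k$.

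Finally I would close the argument: take an arbitrary $z^+\in\mathcal{S}^+_k$, write $z^+ = (A x_k + B u'_k) + w$ with $w\in\mathcal{W}$, and combine the two observations above with the definition of the Minkowski sum to get $z^+ \in \mathcal{M}_{AB}\,\zeta_k \oplus \mathcal{W} = \hat{\mathcal{S}}^+_k$; since $z^+$ is arbitrary, $\mathcal{S}^+_k\subseteq\hat{\mathcal{S}}^+_k$, and compactness of $\hat{\mathcal{S}}^+_k$ follows from that of $\mathcal{W}$ and of the zonotope $\mathcal{M}_{AB}\,\zeta_k$. I do not expect a genuine obstacle: unlike the set-valued reachability result in \cite[Theorem~1]{alanwar2021data}, where one must over-approximate a product of \emph{two} zonotopes, here the state--input argument is a single point, so the only care needed is the bookkeeping that identifies $\mathcal{M}_{AB}\,\zeta_k$ as a zonotope containing $[A,B]\,\zeta_k$; the strictness of the over-approximation is simply inherited from $\mathcal{M}_{AB}$ being, in general, a strict superset of the singleton $\{[A,\,B]\}$.
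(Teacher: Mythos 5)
Your proof is correct and is essentially the argument behind the paper's (uncited-in-text) justification: the paper gives no explicit proof, deferring to \cite[Theorem 1]{alanwar2021data}, whose reasoning specialized to a single state--input point is exactly what you wrote --- $[A,B]\in\mathcal{M}_{AB}$ from Lemma~\ref{lemma:noise_zonotope}, hence $Ax_k+Bu'_k\in\mathcal{M}_{AB}\,[x_k^T,(u'_k)^T]^T$, and the Minkowski sum with $\mathcal{W}$ gives $\mathcal{S}^+_k\subseteq\hat{\mathcal{S}}^+_k$. You also correctly identify (and repair) the statement's typographical slips, namely that the right-hand occurrence of $\hat{\mathcal{S}}^+_k$ should read $\mathcal{S}^+_k$ and that the argument vector should be the one the Safety Verification module actually possesses.
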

	%
	\subsection{Emergency Controller Module} \label{sec:emergency_controller} 
	Since the Emergency Controller does not have access to the reference signal $r_k,$ its objective is to preserve the plant's safety (during the attack) and to guarantee performance recovery (when the attack is terminated). By denoting the logic of the emergency controller as
	\begin{equation}\label{eq:emergency_controller_law}
		u_k^{e}=f_e(x_k),\quad f_e: \mathcal{X}_e \subset \rr^n \rightarrow \mathcal{U}_e \subset \rr^m, 
	\end{equation}
	\eqref{eq:emergency_controller_law} must be designed to fulfill the following requirements:
	\begin{enumerate}
		\item Domain of attraction: 
		\begin{equation} \label{eq:conditions_emergency_controller1}
			\mathcal{X}  \supseteq \mathcal{X}_e\supseteq \mathcal{X}_{\eta}, \quad \mathcal{U}_e\subseteq \mathcal{U},
		\end{equation}
		\item Finite-time Uniformly Ultimately Bounded (UUB) stability in a set
		\begin{equation}\label{eq:conditions_emergency_controller2}
			\hat{\mathcal{T}}_0 \subseteq \mathcal{X}_{\eta}.
		\end{equation}
	\end{enumerate} 
	Condition \eqref{eq:conditions_emergency_controller1}  ensures that the emergency controller fulfills all the constraints and that it can be activated anytime and from any state reachable under the tracking controller, while condition \eqref{eq:conditions_emergency_controller2} guarantees that, in the post-attack phase, the networked controller can be safely re-activated in a finite number of steps. 
	
	%
	As shown in, e.g.,  \cite{attar2024safety}, a controller satisfying the requirement above can be obtained by customizing (starting from an equilibrium point inside $\mathcal{X}_{\eta}$), the set-theoretic data-driven controller developed in \cite{attar2023data}.   Although effective, such a design completely ignores the current state vector of the plant. Consequently, the action of the emergency controller can take the state trajectory far from the desired reference signal. In what follows, we mitigate such a drawback by designing a data-driven set-theoretic emergency controller from a set of $L \geq 1$ admissible equilibrium points $(x^l_e, u^l_e), l \in \mathcal{L}:=\{1, \cdots, L\}.$ A pair $(x^l_e, u^l_e)$ is considered admissible/safe if $x^l_e \in \mathcal{X}_{\eta}, \,\forall l$   and there exists a feedback controller
	\begin{equation} \label{eq:terminal_controller}
		u^l_k = K_l(x_k - x^l_e) + u^l_e,
	\end{equation}
	with gain  $K_l\in \rr^{m \times n},$ and
	such that {the associated smallest RPI set for the closed-loop system under the feedback controller $K_l$} (computed, e.g., using the data-driven methods developed in \cite{chen2021data, RCI_LPV20232} and \cite[Remark~5]{attar2023data}), namely $\hat{\mathcal{T}}^l_0 \in \rr^n$, is contained in the tracking controller's domain, i.e., $\hat{\mathcal{T}}^l_0 \subseteq \mathcal{X}_{\eta}$. 

	Note that by construction, the $L$ controllers \eqref{eq:terminal_controller} and associated RCI sets $\hat{\mathcal{T}}^l_0, l \in \mathcal{L}$ might not guarantee that the requirements \eqref{eq:conditions_emergency_controller1},\eqref{eq:conditions_emergency_controller2} are fulfilled, i.e., it might exist $x\in \mathcal{X}_{\eta}$ such that $ x \not \in {\bigcup^L_{l=1} \hat{\mathcal{T}}^l_0}=\mathcal{X}_e.$ Therefore, to enlarge the domain $\mathcal{X}_{e}$ and comply
	with \eqref{eq:conditions_emergency_controller1}, \eqref{eq:conditions_emergency_controller2} the following strategy is adopted.
	\begin{itemize}
		\item First, a Voronoi partition of $\mathcal{X}_{\eta}$ is created (see, e.g., the 5 partitions in Fig.~\ref{fig:trajectories}). Therefore, a family of polyhedral regions $\{\mathcal{V}_l \}^L_{l=1}$ enjoying the following properties is obtained:
		\begin{equation} \label{eq:voronoi_partition_condition}
			\mathcal{V}_l \!=\! \left \{ x \!\in\! \mathcal{X}_{\eta}: \! ||x - x^l_e||_2 \!\leq \! ||x - x^j_e||_2, \forall j \neq l, j \!\in\! \mathcal{L} \right \},
		\end{equation}
		\begin{equation}
			\bigcup_{l=1}^{L} \mathcal{V}_l = \mathcal{X}_{\eta}.
		\end{equation}
		\item 	Then, by resorting to a data-driven set-theoretic predictive controller paradigm proposed in \cite{attar2023data}, we enlarge the Domain of Attraction (DoA) of each $l-th$ controller \eqref{eq:terminal_controller} to
		cover the associated Voronoi partition $\mathcal{V}_l.$ To this end, a family of data-driven robustly controllable sets is recursively built by adapting the definition of ROSC sets (see Definition.~\ref{def:model_based_controllable_sets}) to the one-step evolution of model \eqref{eq:linear_system} under \eqref{eq:constraints} in an offline phase.  	In particular, families $\left\{\hat{\mathcal{C}}^l_j \right\}^{N_l}_{j=1},$  of $N_l \!\geq \!0$ of ROSC sets are built, with $N_l$ satisfying
		the termination condition $	\bigcup^{N_l}_{j=1}\left\{\hat{\mathcal{C}}^l_j \right\} = \mathcal{V}_l, \forall l=1, \cdots, L.$
	\end{itemize}

	As long as the data-driven computation of	 $\{\hat{\mathcal{C}}^l_j\}^N_{j=1},$ is concerned, we resort to the procedure, summarized in  Lemma~\ref{lemma:augmented_disc}, which resorts to an augmented description of the ROSC sets.  
	\begin{lemma}\label{lemma:augmented_disc} 
		\it Consider a collection of input-state trajectories for \eqref{eq:linear_system}-\eqref{eq:constraints} fulfilling Assumption~\ref{assumption:data_condition}. An inner approximation of the ROSC set \ref{def:model_based_controllable_sets} can be computed as follows \cite[Sec.~III.C]{attar2023data}:
		\begin{equation}\label{eq:inner_ROSC_data_driven_augm}
			\begin{array}{c}
				\hat{\mathcal{C}}^{l}_j= Proj_x(\hat{\Xi}^l_j)=\left\{\!x\! \in\! \rr^n\!:\! H_{\hat{\mathcal{C}}^{l}_j} x \leq h_{\hat{\mathcal{C}}^{l}_j}  \!\right\}, \\
				\hat{\Xi}^{l}_j \!=\text{In}_z\left\{\hat{\Xi}^{l}_{AB}\right\},
				\\
				\!\hat{\Xi}^{l}_{j_{AB}} \!= \!\!\!\!\!\! \displaystyle  
				\bigcap_{[\hat{A}_i,\hat{B}_i]\in \mathcal{V}_{AB}}\!\!\!\!\!\!\!\!\! 
				\left\{z=[x^T,u^T]^T\! \in \rr^{n+m}\!:\!H_{i}^lz\leq h_{i}^l\right\}, 
			\end{array}
		\end{equation}
		where 
		$\hat{\Xi}^{l}_{j_{AB}}$ is the $(x,u)-$ augmented  description of  the ROSC set,  $\mathcal{V}_{AB}$  denotes the  matrix vertices of $\mathcal{M}_{AB},$  $\text{In}_z(\cdot)$ is an operator computing a zonotopic inner approximation of a polytope, 
		$Proj_x(\hat{\Xi}^l_j)$ performs a projection of  $\hat{\Xi}^l_j$ into the $x-$domain, and 
		\begin{equation}\label{eq:H-rep_extended}
			H_i^l= \begin{bmatrix}
				H_x & 0 \\
				H_{\hat{\mathcal{C}}^l_{j-1}}\hat{A}_i & H_{\hat{\mathcal{C}}^l_{j-1}}\hat{B}_i\\
				0 & H_u
			\end{bmatrix}, 
			\quad
			h_z^i=\begin{bmatrix}
				h_x\\
				\Tilde{h}_{\hat{\mathcal{C}}^l_{j-1}}\\
				h_u
			\end{bmatrix}, 
		\end{equation}
		with 
		\begin{equation}\label{eq:compute_tilde_h}
			[\tilde{h}_{\hat{\mathcal{C}}^l_{j-1}}]_r = \min_{w\in \mathcal{W}}\left\{[h_{\hat{\mathcal{C}}^l_{j-1}}]_r - [H_{{\hat{\mathcal{C}}^l_{j-1}}}]_r w \right\},  
		\end{equation}
		and $[h_{\hat{\mathcal{C}}^l_{j-1}}]_r,$ $[H_{{\hat{\mathcal{C}}^l_{j-1}}}]_r$ the $r-th$ row of $h_{\hat{\mathcal{C}}^l_{j-1}}$ and $H_{{\hat{\mathcal{C}}^{l}}_{j-1}}.$
	\end{lemma}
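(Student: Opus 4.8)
The plan is to recover the stated construction by instantiating the model-based ROSC set of Definition~\ref{def:model_based_controllable_sets} for the plant \eqref{eq:linear_system}--\eqref{eq:constraints} with target $\hat{\mathcal{C}}^l_{j-1}=\{x:H_{\hat{\mathcal{C}}^l_{j-1}}x\le h_{\hat{\mathcal{C}}^l_{j-1}}\}$, and then replacing the unknown pair $[A,B]$ by a robustness requirement over the data-consistent matrix zonotope $\mathcal{M}_{AB}$ of Lemma~\ref{lemma:noise_zonotope}. First I would make explicit that $x\in\mathcal{X}$ belongs to the true ROSC set iff there exists $u\in\mathcal{U}$ with $Ax+Bu\oplus\mathcal{W}\subseteq\hat{\mathcal{C}}^l_{j-1}$, and that by the standard halfspace characterization of a Pontryagin-type inclusion this is equivalent, row by row, to $[H_{\hat{\mathcal{C}}^l_{j-1}}]_r(Ax+Bu)\le[h_{\hat{\mathcal{C}}^l_{j-1}}]_r-\max_{w\in\mathcal{W}}[H_{\hat{\mathcal{C}}^l_{j-1}}]_r w$, i.e.\ the tightened right-hand side $[\tilde h_{\hat{\mathcal{C}}^l_{j-1}}]_r$ of \eqref{eq:compute_tilde_h}. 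This already produces the middle block of \eqref{eq:H-rep_extended} once the pair of system matrices is known.

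Next I would handle the model uncertainty. Since $[A,B]\in\mathcal{M}_{AB}$, imposing the tightened inclusion uniformly over all $[\hat A,\hat B]\in\mathcal{M}_{AB}$ is \emph{sufficient} for the true-system inclusion, hence any $x$ admitting a single $u$ satisfying the uniform condition lies in the true ROSC set; this, together with the zonotopic inner approximation introduced below, is the source of the ``inner'' qualifier. The key observation is that, for fixed $(x,u)$, each row of the tightened inclusion is affine in $(\hat A,\hat B)$, so the set of matrices satisfying it is a polyhedron, and a bounded polytope lies inside a polyhedron iff all its vertices do. As $\mathcal{M}_{AB}$ is a matrix zonotope (hence a polytope) with vertex set $\mathcal{V}_{AB}$, the uniform condition collapses to the finite intersection over $[\hat A_i,\hat B_i]\in\mathcal{V}_{AB}$ in \eqref{eq:inner_ROSC_data_driven_augm}. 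Stacking, for each vertex $i$, the constraints $H_x x\le h_x$, $H_{\hat{\mathcal{C}}^l_{j-1}}\hat A_i x+H_{\hat{\mathcal{C}}^l_{j-1}}\hat B_i u\le\tilde h_{\hat{\mathcal{C}}^l_{j-1}}$, and $H_u u\le h_u$ yields exactly the $(x,u)$-augmented halfspace system $H_i^l z\le h_z^i$ of \eqref{eq:H-rep_extended}, and intersecting over $i$ gives the augmented polytope $\hat{\Xi}^l_{j_{AB}}$.

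Finally I would close the construction with the two operators: a zonotopic inner approximation $\hat{\Xi}^l_j=\text{In}_z\{\hat{\Xi}^l_{j_{AB}}\}$ preserves inclusion, and the existential quantifier over $u$ in the ROSC definition is precisely a projection onto the $x$-coordinates, so $\hat{\mathcal{C}}^l_j=Proj_x(\hat{\Xi}^l_j)$ is contained in the true ROSC set of \eqref{eq:linear_system} with target $\hat{\mathcal{C}}^l_{j-1}$; moreover a projection of a zonotope is a polytope, so $\hat{\mathcal{C}}^l_j$ is again available in halfspace form, making the recursion over $j$ well posed once initialized with the RCI set $\hat{\mathcal{C}}^l_0=\hat{\mathcal{T}}^l_0$. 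I expect the main obstacle to be the uncertainty-reduction step: one must argue carefully that (i) robustifying over all of $\mathcal{M}_{AB}$ is only sufficient, not necessary, for the true inclusion, so the result is genuinely an inner and not an exact approximation, and (ii) the reduction from ``all of $\mathcal{M}_{AB}$'' to ``the vertices $\mathcal{V}_{AB}$'' is legitimate despite the potentially large vertex set, which hinges entirely on the affine dependence of the constraints on $(\hat A,\hat B)$; the remaining manipulations of halfspace representations are routine and can be imported from \cite[Sec.~III.C]{attar2023data}.
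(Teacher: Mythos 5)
Your reconstruction is correct and follows exactly the route the paper imports by citation from \cite[Sec.~III.C]{attar2023data}: row-wise Pontryagin tightening of the target against $\mathcal{W}$, robustification over $\mathcal{M}_{AB}$ reduced to its vertices $\mathcal{V}_{AB}$ via affine dependence of the constraints on $[\hat{A},\hat{B}]$, stacking into the augmented $(x,u)$ polytope, zonotopic inner approximation, and projection to realize the existential quantifier over $u$. The paper itself offers no proof beyond the citation, and your argument correctly identifies the two sources of conservatism (uniform robustification over $\mathcal{M}_{AB}$ and the $\text{In}_z$ step) that make $\hat{\mathcal{C}}^l_j$ an inner rather than exact approximation, so nothing is missing.
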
 $\hfill\square$
	Given $\{\hat{\mathcal{C}}^l_j\}^N_{j=0},$ and a convex cost function $J(x_k,u),$ the online operations of the Safety Verification Module and Emergency data-driven set-theoretic controller can be summarized as in Algorithm~\ref{algorithm:set_theoretic_MPC_and_SV}.
	\begin{algorithm}[h!] 
		\textit{Offline:} Compute 
		$\hat{\mathcal{C}}_0^l$ and $\{\hat{\Xi}^{l}_j$, $\hat{\mathcal{C}}^{l}_j\}_{j=1}^{N_l}, l\!\!=\!\!1, \ldots, L$ as in \eqref{eq:inner_ROSC_data_driven_augm}. Set $f=1.$

		\noindent
		\xrfill[0.7ex]{1pt} 
		
		\textit{Online ($\forall\, k$):}
		
		\begin{algorithmic}[1]
			
			\If{ $f==0$ \textbf{or} $u_k'\notin \mathcal{U}$ \textbf{or} $\mathcal{S}^{+}_k \not \subseteq \mathcal{X}_{\eta}$}
			\Comment{\textit{Activate EC}}
			\State \Comment{\textit{EC starts}}
			\State Set $f=0$ and find $\bar{l} \in \mathcal{L}$ such that $x_k \in \mathcal{V}_{\bar{l}}$ \;
			\State Find $\bar{j}_k := \displaystyle \!\!\!\!\!\!\! \min_{j\in \{0,\ldots,N_l\}} \{j: x_k \in  \hat{\mathcal{C}}^l_j \}$\;
			\If{$\bar{j}_k==0$}
			\State { $u^{e}_k = f_0^l(x_k)$}, $f=1$
			\Else 
			\begin{equation}\label{eq:control_based_extended_control_regions}
					\displaystyle		u^{e}_k = \arg\min_u J(x_k, u) \quad s.t. 
					\left[x_k^T,u^T\right]^T \in \hat{\Xi}_{j_k}^l.
			\end{equation}
		\EndIf
		\Comment{\textit{EC ends}}
		\State Apply $u_k^e$  \Comment{\textit{EC} \textit{control law}}
		\Else
		\State Apply $u_k'$  \Comment{\textit{TS} \textit{control law}}
		\EndIf
	\end{algorithmic}
	\caption{Safety Verification (SV) and Emergency Controller (EC)}
	\label{algorithm:set_theoretic_MPC_and_SV}
\end{algorithm}
where the flag $f$ is used to make sure that the emergency controller, once activated, will remain active at least until the terminal region is reached.  Given the operations of the \textit{SV} and \textit{EC} modules (see Algorithm~\ref{algorithm:set_theoretic_MPC_and_SV}), the following proposition holds true (adapted from the results in \cite[Proposition~1]{gheitasi2022worst}):

\begin{proposition}\cite{gheitasi2022worst}{\label{proposition:em_safety_and_recovery}}
	Consider the sets of equilibrium pairs $\left\{ (x^l_e, u^l_e)\right\}$ and Voronoi partition $\left\{ \mathcal{V}_l\right\}^L_{l=1}$, the RCI sets $\left\{ \mathcal{T}_0^l\right\}^L_{l=1},$ and the families of ROSC sets $\left\{\hat{\mathcal{C}}_j^l \right\}_{j=1}^{N_l}, l\in \mathcal{L}$ computed according to \eqref{eq:inner_ROSC_data_driven_augm}. Then, if at $k=k'$ a persistent cyber-attack starts on the actuation channel and $x_{k'}\in \mathcal{V}_l, 1 \leq l \leq L,$ then the emergency controller ensures that safety and recovery are guaranteed (see Definition~\ref{def:safety_def}). Moreover, for $k \geq k' + N_l$ the tracking error $e_k=x_k - r_k$ is such that $e_k\leq d^{sup}(\hat{\mathcal{T}}^{l_i}_0, r_k),\forall k \geq k' + N_l$.
\end{proposition}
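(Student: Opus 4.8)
The plan is to split the claim into its two parts — (i) safety and recovery under a persistent actuation attack, and (ii) the asymptotic tracking-error bound for $k \geq k' + N_l$ — and to argue each by combining the properties of the ROSC-set families with the switching logic of Algorithm~\ref{algorithm:set_theoretic_MPC_and_SV}. First I would observe that, since the attack starts at $k=k'$ on the actuation channel, from that time onward the signal $u_k'$ applied to the plant is untrustworthy; hence the Safety Verification test \eqref{eq:safety_one_step_ev} (using the data-driven over-approximation $\hat{\mathcal S}^+_k \supseteq \mathcal S^+_k$ of Lemma~\ref{lemma:over_approx_forward_sets}) will, no later than the first step at which $\hat{\mathcal S}^+_k \not\subseteq \mathcal X_\eta$ or $u_k'\notin\mathcal U$, set $f=0$ and hand control to the emergency controller; because $f$ is latched, the \textit{EC} then remains active until the terminal region is reached. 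It is worth noting that even before the test trips, the state stays in $\mathcal X_\eta$ by Assumption~\ref{assumption:control_center}, so at the switching instant $x\in\mathcal X_\eta=\bigcup_l\mathcal V_l$ and the index $\bar l$ with $x_k\in\mathcal V_{\bar l}$ in line~3 is well defined.

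For part (i), the key step is the standard set-theoretic recursion: by construction $\bigcup_{j=1}^{N_l}\hat{\mathcal C}^l_j=\mathcal V_l$ and, by the ROSC property encoded in \eqref{eq:inner_ROSC_data_driven_augm}–\eqref{eq:H-rep_extended}, for every $x\in\hat{\mathcal C}^l_j$ the input $u^e_k$ returned by \eqref{eq:control_based_extended_control_regions} (feasible because $[x_k^T,u^T]^T\in\hat\Xi^l_{j_k}$) satisfies $u^e_k\in\mathcal U$ and guarantees $x_{k+1}\in\hat{\mathcal C}^l_{j-1}$ for \emph{every} admissible disturbance and \emph{every} $[\hat A_i,\hat B_i]\in\mathcal V_{AB}$, hence in particular for the true $(A,B)\in\mathcal M_{AB}$; the robust tightening \eqref{eq:compute_tilde_h} is exactly what makes this hold for all $w\in\mathcal W$. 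Therefore the index $\bar j_k$ found in line~4 is non-increasing and strictly decreases by one at each step, so after at most $N_l$ steps we reach $\bar j_k=0$, at which point $x_k\in\hat{\mathcal C}^l_0=\hat{\mathcal T}^l_0$ and the local controller $f^l_0$ keeps the state in the RCI set $\hat{\mathcal T}^l_0\subseteq\mathcal X_\eta$ thereafter. This proves constraint satisfaction at all times (safety, Definition~\ref{def:safety_def}) and that the state is driven into $\mathcal X_\eta$ in finitely many steps, so that the networked controller can be safely re-engaged once the attack ends (recovery). Note that the chain of inclusions $\mathcal X\supseteq\mathcal X_e\supseteq\mathcal X_\eta$ from \eqref{eq:conditions_emergency_controller1} is what guarantees the \textit{EC} can be activated from any reachable state, closing the argument.

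For part (ii), once $k\geq k'+N_l$ the state is confined to the RCI set $\hat{\mathcal T}^l_0$ (writing $l_i$ for the active index), so the tracking error obeys $e_k=x_k-r_k$ with $x_k\in\hat{\mathcal T}^{l_i}_0$; taking the worst case over $x_k\in\hat{\mathcal T}^{l_i}_0$ gives $\|e_k\|\leq d^{sup}(\hat{\mathcal T}^{l_i}_0,r_k):=\max_{x\in\hat{\mathcal T}^{l_i}_0}\|x-r_k\|$, which is the stated bound. The main obstacle I anticipate is not any of these individual inclusions but the data-driven soundness glue: one must be careful that the inner approximation in Lemma~\ref{lemma:augmented_disc} is taken over the matrix vertices $\mathcal V_{AB}$ of $\mathcal M_{AB}$ (so the one-step containment holds for the convex hull, hence for the true system), and that the over-approximation $\hat{\mathcal S}^+_k$ used by the \textit{SV} module never misses an unsafe one-step evolution — i.e.\ that the detection/switching logic cannot be ``fooled'' into leaving $f=1$ while the plant is actually being driven out of $\mathcal X_\eta$. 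Once those two facts are invoked (they are exactly Lemmas~\ref{lemma:over_approx_forward_sets} and~\ref{lemma:augmented_disc}), the rest is the finite-step contraction argument above, which is essentially the model-based proof of \cite[Proposition~1]{gheitasi2022worst} transported verbatim to the data-driven sets.
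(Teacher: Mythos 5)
The paper does not actually prove this proposition: it is stated as an adaptation of \cite[Proposition~1]{gheitasi2022worst} and the proof is deferred to that reference, so there is no in-paper argument to compare against. Your reconstruction is the natural one and is essentially correct: it is the standard set-theoretic contraction argument (feasibility of \eqref{eq:control_based_extended_control_regions} on $\hat{\Xi}^l_{j_k}$, one-step robust transfer $\hat{\mathcal{C}}^l_j \to \hat{\mathcal{C}}^l_{j-1}$ for all $w\in\mathcal{W}$ and all vertex models, hence for the true $[A,B]\in\mathcal{M}_{AB}$ by convexity of the affine constraints in $[\hat A,\hat B]$, finite-time entry into the RCI terminal set, and the worst-case distance bound once confined there), transported to the data-driven sets exactly as the paper intends.

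Three small corrections are worth making. First, your claim that ``even before the test trips, the state stays in $\mathcal{X}_\eta$ by Assumption~\ref{assumption:control_center}'' is mis-attributed: once the actuation channel is corrupted, Assumption~\ref{assumption:control_center} (which concerns the attack-free loop) gives nothing; what keeps $x_{k+1}\in\mathcal{X}_\eta$ while $f=1$ is precisely the \textit{SV} check itself, since $u_k'$ is applied only when $\hat{\mathcal{S}}^{+}_k \subseteq \mathcal{X}_{\eta}$, and $\hat{\mathcal{S}}^{+}_k \supseteq \mathcal{S}^{+}_k$ guarantees the true one-step evolution cannot escape. Second, the index $\bar j_k$ decreases by \emph{at least} one per step, not exactly one; the conclusion (at most $N_l$ steps) is unaffected. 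Third, the bound ``for $k\geq k'+N_l$'' implicitly assumes the \textit{EC} activates at $k'$; as you yourself note, the \textit{SV} test may trip only at some later step if the corrupted $u_k'$ happens to pass both checks, in which case the $N_l$-step clock starts at the activation instant rather than at $k'$ (and the terminal set index may correspond to the Voronoi cell occupied at activation, not at $k'$). This is a looseness in the proposition statement itself rather than a flaw in your argument, but a careful proof should state the convergence horizon relative to the activation time.
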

\begin{proof}
	{
	Assume that at time $k=k'$ a persistent cyber-attack starts on the
	actuation channel and that $x_{k'}\in\mathcal{V}_{l}$ for some
	$l\in\mathcal{L}$. According to Algorithm~\ref{algorithm:set_theoretic_MPC_and_SV}, once the
	Safety Verification module detects that the received input is not safe,
	the Emergency Controller is activated and the flag is set to $f=0$.
	Therefore, the Emergency Controller remains active until the terminal
	region associated with the selected Voronoi partition is reached.
	
	By construction, for each Voronoi region $\mathcal{V}_{l}$, a terminal
	RCI set $\hat{\mathcal{T}}^{l}_{0}\subseteq\mathcal{X}_{\eta}$ and a
	finite family of ROSC sets $\{\hat{\mathcal{C}}^{l}_{j}\}_{j=1}^{N_l}$
	are computed such that
	\begin{equation}
		\mathcal{V}_{l}\subseteq \bigcup_{j=0}^{N_l}\hat{\mathcal{C}}^{l}_{j},
		\qquad
		\hat{\mathcal{C}}^{l}_{0}:=\hat{\mathcal{T}}^{l}_{0}.
	\end{equation}
	Hence, since $x_{k'}\in\mathcal{V}_{l}$, there exists an index
	$j_{k'}\in\{0,\ldots,N_l\}$ such that
	$x_{k'}\in\hat{\mathcal{C}}^{l}_{j_{k'}}$.
	
	If $j_{k'}=0$, then $x_{k'}\in\hat{\mathcal{T}}^{l}_{0}$. Since
	$\hat{\mathcal{T}}^{l}_{0}$ is an RCI set for the local controller
	$f^{l}_{0}(\cdot)$, there exists an admissible input satisfying the input
	constraints such that the successor state remains in
	$\hat{\mathcal{T}}^{l}_{0}$ for all admissible disturbances. Therefore,
	the state and input constraints are satisfied and the plant remains safe.
	
	Consider now the case $j_{k'}>0$. From the definition of the ROSC set,
	for every $x\in\hat{\mathcal{C}}^{l}_{j}$, $j>0$, there exists an
	admissible control input $u\in\mathcal{U}$ such that the successor state (i.e., the next state $x_{k'+1}$)
	belongs to $\hat{\mathcal{C}}^{l}_{j-1}$ for all admissible disturbances.
	Equivalently, the augmented set
	$\hat{\Xi}^{l}_{j}$ contains admissible state-input pairs
	$(x,u)$ that robustly steer the state from
	$\hat{\mathcal{C}}^{l}_{j}$ to $\hat{\mathcal{C}}^{l}_{j-1}$ in one
	step. Therefore, the optimization problem in
	\eqref{eq:control_based_extended_control_regions} is feasible for
	$x_k\in\hat{\mathcal{C}}^{l}_{j}$ and returns an admissible input
	$u^{e}_{k}\in\mathcal{U}$ that satisfies the state and input constraints
	and guarantees $x_{k+1}\in\hat{\mathcal{C}}^{l}_{j-1},
	 \forall w_k\in\mathcal{W}.$
	By applying this argument recursively, the index of the active ROSC set
	decreases at least by one at each time step. Hence, after at most
	$j_{k'}\leq N_l$ steps, the state reaches the terminal RCI set
	$\hat{\mathcal{T}}^{l}_{0}$.
	
	Once $x_k\in\hat{\mathcal{T}}^{l}_{0}$, the local feedback controller
	$f^{l}_{0}(\cdot)$ is applied. Since $\hat{\mathcal{T}}^{l}_{0}$ is RCI
	and satisfies $\hat{\mathcal{T}}^{l}_{0}\subseteq\mathcal{X}_{\eta}
	\subseteq\mathcal{X}$, the closed-loop trajectory remains inside the
	admissible state constraint set and the corresponding control inputs
	remain inside $\mathcal{U}$. Therefore, safety is preserved during the
	attack. Moreover, because the terminal RCI set is contained in
	$\mathcal{X}_{\eta}$, the networked tracking controller can be safely
	reactivated once the attack-free condition is restored, which guarantees
	recovery.
	
	Finally, for all $k\geq k'+N_l$, the state trajectory is confined in the
	terminal RCI region $\hat{\mathcal{T}}^{l}_{0}$. Therefore, the tracking
	error with respect to the reference signal satisfies
	\begin{equation}
			\|e_k\|=\|x_k-r_k\|
		\leq d^{sup}(\hat{\mathcal{T}}^{l_i}_0, r_k),
		\qquad \forall k\geq k'+N_l,
	\end{equation}
	which proves the stated tracking-error bound. This completes the proof.
	}
\end{proof}
\subsection{Tracking Supervisor Module (\textit{TS})} \label{sec:tracking_supervisor} 

If the anomaly detector module detects an anomaly, then the attack could be either on the actuation and/or measurement channel. In what follows, the \textit{TS} actions are derived assuming that the cyber-attack affects the measurement channel. In particular,  the objective of this module is to allow the networked control system to operate safely in a controlled open-loop mode (from the last attack-free state measurement) and minimize the tracking performance loss.

By taking into account the worst-case attack detection delay $\tau,$ if a cyber-attack is detected at $k'$, then the last reliable measurement is $x_{k'-\tau-1}.$ However, by exploiting forward reachability arguments, it is possible to robustly estimate from $x_{k'-\tau-1}$ the set of states $\hat{\mathcal{R}}_{k'}$  containing the current state $x_{k'},$ i.e., such that $x_{k'}\in\hat{\mathcal{R}}_{k'}.$ 
In a data-driven fashion, the set  $\hat{\mathcal{R}}_{k'}$ can be outer approximated by resorting to the following recursive RORS data-driven computation starting from the initial condition $\hat{\mathcal{R}}_{k'-\tau-1}= x_{k'-\tau-1},$
\begin{equation} \label{eq:forward_evolution_prediction}
	\begin{array}{rcr}
		\hat{\mathcal{R}}_{k'-\tau+t}\!\!\!\! &=&\!\!\!\!\! \mathcal{M}_{AB}
		\left [\mathcal{R}_{k'-\tau +t -1}, u_{k'-\tau +t -1}\right ] \oplus \mathcal{W},\\
		&& \forall\, t \in \Z_+.
	\end{array}	
\end{equation}

Given $\hat{\mathcal{R}}_{k'-\tau+t},$ \textit{TS} is instructed to replace $\forall k\geq k',$ ${x}_{k}$  with an admissible state $\hat{x}_{k} \in \hat{\mathcal{R}}_{k}.$ Consequently, under attack, the tracking controller will compute the following control action
\begin{equation}\label{eq:tracking_under_attack}
	u_{k} = \eta(\hat{x}_{k},r_k), \quad \forall \,k\geq  k'.
\end{equation}
{
	\begin{remark}\label{remark:computations}
		The computationally intensive operations required by Algorithm~\ref{algorithm:set_theoretic_MPC_and_SV}, including the construction of the Voronoi partition, the computation of the terminal RCI regions, and the recursive computation of the families of ROSC sets, are performed offline. The resulting sets are stored and used online only through region selection, set-membership checks, and the solution of ~\eqref{eq:control_based_extended_control_regions}. Therefore, the online complexity is significantly lower than the offline design complexity. The number of Voronoi regions $L$ defines a design trade-off: larger values of $L$ provide a finer partition of $\mathcal{X}_{\eta}$ and may reduce tracking performance degradation during attacks, but they also increase the offline computational burden and memory requirements because more RCI regions and ROSC families must be computed and stored. In practice, $L$ should be selected according to the desired tracking-performance granularity, and the available offline computational resources.
\end{remark}}
\subsubsection{Tracking Performance Evaluation} \label{sec:tracking_performance_evaluation} 
To design the tracking supervisor logic to minimize the tracking performance loss under attack, we need to first offline approximately quantify the tracking performance degradation associated with the emergency controller actions. In particular, the tracking index $I(i, j)$ is here proposed:
\begin{equation} \label{eq:tracking_performance_index}
	I(i,j) = \alpha I_1(l_i, l_j) + \beta I_2(l_i, l_j), \quad l_i, l_j \in \mathcal{L},
\end{equation}
where $\alpha,\beta \geq 0$ are two weighting factors and 
\begin{itemize}
	\item $I_1(l_i, l_j) = d^{sup}(\hat{\mathcal{T}}^{l_i}_0, x^{l_j}_e)$ and $d^{sup}(\mathcal{S},p)$ computes the maximum distance between a point $p \in \rr^s$ and set $\mathcal{S} \subset \rr^s$  (see \cite[Definition 2]{gheitasi2022worst}). Such an index quantifies the nominal tracking error if $x \in \hat{\mathcal{T}}_0^{l_i}\subseteq \mathcal{V}_{l_i}$ and $r_{k'}$ belongs to $\mathcal{V}_{l_j.}$ In particular, $\hat{\mathcal{T}}^{l_i}_0$ is the RCI set where the state of the system will be confined in $N_{l_i}$ steps if the emergency controller is activated at the current time, and $ x^{l_j}_e$ is the disturbance-free equilibrium point of the partition containing $r_k.$
	\item $ \displaystyle I_2(l_i, l_j) \!=\!\!\! \min_{0 \leq p \leq N_{l_j}} p: \hat{\mathcal{T}}^{l_i}_0 \subseteq \bigcup_{s=0}^p \{\hat{\mathcal{C}}^{l_j}_s \}$, with $\{\hat{\mathcal{C}}^{l_j}_s \}_{s=0}^{N_{l_j}}$ a set of $N_{l_j} \geq 0$ ROSC set built as prescribed by \eqref{eq:ROSC-set} with starting RCI set $\hat{\mathcal{T}}^{l_j}_0 = \mathcal{V}_j$ and terminal condition $\hat{\mathcal{T}}^{l_i}_0 \subseteq \bigcup_{s=0}^{N_{l_j}}\{\hat{\mathcal{C}}^{l_j}_s \}.$ Such index quantifies the worst-case number of steps required for $x_{k}\in \!\hat{\mathcal{T}}^{l_i}_0 \subseteq \mathcal{V}_{l_i}$ to enter the Voronoi partition $\mathcal{V}_{l_j}$ containing $r_{k'}.$ 
\end{itemize}
\begin{remark} \label{remark:indexes}
	In simpler terms, by assuming a constant reference
	signal during the attack phase, $I_1(l_i, l_j)$ approximates the steady-state tracking error committed activating the emergency controller
	during the attack, while $I_2(l_i, l_j)$ approximates
	the time required to recover the reference tracking problem when the attack is terminated. $\hfill\square$
\end{remark}
Then, it is possible to sort all the pairs $(l_i, l_r), \forall l_i \in \mathcal{L}$ in an ascending order according to the tracking index	$I(l_i, l_r)$, i.e., 
\begin{equation} \label{eq:sorting_indexes}
	\begin{array}{c}
		\mathcal{I}(l_r) = \begin{bmatrix}
			I(l_1, l_r), \cdots,  I(l_x, l_r),  \cdots,   I(l_L, l_r) 
		\end{bmatrix} \\
		l_j \!\in\! \mathcal{L}, \forall j, I(l_1, l_r)\! \leq \! \cdots \!\leq \!  I(l_x, l_r), \!\leq\!  \cdots,  \!\leq\! I(l_L, l_r). 
	\end{array}
\end{equation}
Therefore, if $I(l_x, l_r) = I(l_1, l_r)$ then, the lowest tracking performance loss is obtained by forcing $x'_{k},$ $\forall\,k\geq k',$  to remain in $\mathcal{V}_{l_x}$. On the other hand, if $I(l_x, l_r) \neq I(l_1, l_r)$ then better tracking performance is obtained if $x'_{k}, k\geq k'$ can be steered into a pair $I(l_j, l_r)$ such that $I(l_j, l_r) < I(l_x, l_r).$

Since we can only robustly predict the robust reachable sets, $\hat{\mathcal{R}}_{k}$ prevents us from deterministically evaluating the tracking index $I(l_i, l_j)$ at the next time instant using a single vector approach. Consequently, the following index $J$ is defined:
\begin{equation} \label{eq:performance_index_weighted}
	\begin{array}{c}
		J_{k+1} = \! \displaystyle 
		\sum_{l_j \in \mathcal{L}} \!\!\frac{vol\left(\hat{\mathcal{R}}_{k+1} \cap \mathcal{V}_{l_j}\right)}{vol\left(\hat{\mathcal{R}}_{k+1} \right)} I(l_j,l_{r_{k}}) ,\quad k\geq k',
	\end{array}
\end{equation}
where $vol(\cdot)$ computes the volume of a set  
and $J_{k+1}$ defines a weighted sum of the tracking index based on the volume overlap between the uncertain prediction set and the Voronoi regions. Then, the open-loop tracking controller is kept active until one of the following stopping conditions is verified:
\begin{enumerate}
	\item $\hat{\mathcal{R}}_{k+1} \notin \mathcal{X}_{\eta}.$ Such a condition implies that $u_k,$ if applied to the plant, could bring the state of the plant outside of the admissible controller regions
	\item $J_{k+1}>J_{k}.$ This condition implies that the robust tube containing the state trajectory is moving toward regions with higher tracking performance loss (according to the index $I$).
\end{enumerate}
When one of the two conditions above arises, the \textit{TS} is instructed to replace $u_k$ with an invalid $u_k \in \mathcal{U}$ to activate the emergency controller intentionally (see Section~\ref{sec:safety_verification}).
Given the above results, the logic of the tracking supervisor has been summarized in Algorithm.~\ref{algorithm:tracking_supervisor}. 
{
	\begin{remark}\label{remark:conservatism}
		Note that the proposed solution resorts to worst-case arguments. Consequently, conservatism is an inherent feature of the proposed data-driven set-theoretic framework. In particular, the use of outer approximations $\hat{\mathcal{S}}_{k}^{+} \supseteq \mathcal{S}_{k}^{+}$ and of the matrix zonotope $\mathcal{M}_{AB}$ are instrumental to account for all system realizations consistent with the collected data and disturbance bounds, thereby ensuring that unsafe evolutions are not missed. At the same time, this introduce conservatism that may lead, for example, to an early activation of the Emergency Controller (EC), which can affect tracking performance. However, the Tracking Supervisor (TS) is designed to mitigate this effect by allowing controlled open-loop tracking based on reachable-set predictions and by delaying the activation of the EC as long as safety is certified and the performance index $J$ improves. Hence, the proposed architecture does not eliminate conservatism, which would compromise safety, but mitigates its impact on performance. 
	\end{remark}
}
\begin{algorithm}[h!] 
	\noindent
	\textit{Online ($\forall\, k$):}
	\begin{algorithmic}[1]
		\If{$d_k==1$}{}
		\State 
		Estimate  $\hat{\mathcal{R}}_k$ using \eqref{eq:forward_evolution_prediction} 
		and compute $u_k$ as in  \eqref{eq:tracking_under_attack}. 
		\If{($\hat{\mathcal{R}}_{k+1} \!\not \subseteq \! \!\mathcal{X}_{\eta}$) or ($J_{k+1}\!>\!J_{k}$)}{} \label{step_check}
		\State Replace $u_k$ with any $u_k\notin \mathcal{U}.$
		\EndIf
		\EndIf
		\State 	$u_{k}$ is sent.
	\end{algorithmic}
	\caption{Data-Driven Tracking Supervisor (\textit{TS})}
	\label{algorithm:tracking_supervisor}
\end{algorithm}
\begin{proposition} \label{proposition:performance_loss}
	Under any admissible cyber-attack, the data-driven tracking supervisor (Algorithm~\ref{algorithm:tracking_supervisor}) allows obtaining a tracking performance index ${I}$ better or equal to the one obtainable using only the safety verification module and emergency controller (Algorithm~\ref{algorithm:set_theoretic_MPC_and_SV}).
\end{proposition}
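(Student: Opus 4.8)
The plan is to show that both competing strategies eventually hand control to the emergency controller (EC), and that Algorithm~\ref{algorithm:tracking_supervisor} does so at an instant at which the volume-weighted tracking index \eqref{eq:performance_index_weighted} is no larger than the value it attains at the detection time, which is exactly the value committed by the baseline strategy (Algorithm~\ref{algorithm:set_theoretic_MPC_and_SV} alone). First I would fix $k'$ as the first time the anomaly detector returns $d_{k'}=1$ and split on the attack location. If the attack corrupts the actuation channel, then $u'_{k'}\notin\mathcal{U}$ or $\mathcal{S}^{+}_{k'}\not\subseteq\mathcal{X}_{\eta}$ (cf. \eqref{eq:safety_one_step_ev}), so the SV module inside Algorithm~\ref{algorithm:tracking_supervisor} fires the EC exactly as the baseline would, the two strategies coincide, and the index is identical. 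Hence one may assume the attack is on the measurement channel: under the baseline an invalid input is sent at $k'$, the EC is activated from $x_{k'}$, and — by the preceding proposition \cite{gheitasi2022worst} together with \eqref{eq:tracking_performance_index}--\eqref{eq:performance_index_weighted} — the index committed, in the volume-averaged sense of \eqref{eq:performance_index_weighted} relative to the only available localization $x_{k'}\in\hat{\mathcal{R}}_{k'}$, equals $J_{k'}$; under Algorithm~\ref{algorithm:tracking_supervisor} the open-loop law \eqref{eq:tracking_under_attack} is instead retained until the first time $k^{\star}\ge k'$ at which a stopping condition triggers, at which instant the EC is forced on from $x_{k^{\star}}\in\hat{\mathcal{R}}_{k^{\star}}$, committing the index $J_{k^{\star}}$. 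The statement therefore reduces to $J_{k^{\star}}\le J_{k'}$.

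Second, I would prove $J_{k^{\star}}\le J_{k'}$ by telescoping the stopping rule. For every $k$ with $k'\le k<k^{\star}$ the open-loop law was \emph{not} replaced, which — reading the stopping conditions of Algorithm~\ref{algorithm:tracking_supervisor} — means both $\hat{\mathcal{R}}_{k+1}\subseteq\mathcal{X}_{\eta}$ and $J_{k+1}\le J_{k}$. Chaining the latter inequality over $k=k',\ldots,k^{\star}-1$ gives $J_{k^{\star}}\le J_{k^{\star}-1}\le\cdots\le J_{k'}$, so the index committed by Algorithm~\ref{algorithm:tracking_supervisor} is never larger than that of the baseline. If a stopping condition already triggers at $k'$, or the attack terminates at $k'$, then $k^{\star}=k'$ and the two indices coincide, which covers the ``equal'' case of the statement.

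Third, I would check that the TS strategy is well posed — i.e., safe — so that comparing indices is meaningful. By Lemma~\ref{lemma:over_approx_forward_sets} applied along the recursion \eqref{eq:forward_evolution_prediction}, $x_{k}\in\hat{\mathcal{R}}_{k}$ for all $k\ge k'$; combined with the stopping condition $\hat{\mathcal{R}}_{k+1}\subseteq\mathcal{X}_{\eta}$ (enforced whenever the open-loop law is retained) this yields $x_{k+1}\in\mathcal{X}_{\eta}\subseteq\mathcal{X}_{e}$ (cf. \eqref{eq:conditions_emergency_controller1}). Hence the EC is always activatable from $x_{k^{\star}}$, and the preceding proposition \cite{gheitasi2022worst} then guarantees safety and recovery of the resulting emergency maneuver. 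Therefore Algorithm~\ref{algorithm:tracking_supervisor} is a safe strategy that, by the previous two steps, commits a tracking index $J_{k^{\star}}\le J_{k'}$, establishing the claim.

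The step I expect to be the main obstacle is the first one: making rigorous the sense in which the scalar $J_{k}$ of \eqref{eq:performance_index_weighted} is the correct quantity through which to compare the two strategies, given that the \emph{realized} index $I(l_{i},l_{r})$ depends on the actual Voronoi cell containing the (unknown) state when the EC engages. This amounts to identifying $J_{k}$ as the worst-case/volume-averaged value of $I$ conditioned on the information set $\hat{\mathcal{R}}_{k}$ and, on that basis, arguing that the monotone non-increase of $J$ enforced by the stopping rule is exactly the surrogate for ``a tracking index $I$ better or equal''. Once this identification is granted, the telescoping argument of the second step and the safety verification of the third step are routine.
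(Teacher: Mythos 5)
Your proof follows essentially the same route as the paper's: split on whether the attack hits the actuation or the measurement channel, observe that in the former case the \textit{TS} cannot influence the outcome, and in the latter case exploit the stopping rule of Algorithm~\ref{algorithm:tracking_supervisor} to conclude $J_{k+1}\le J_k$ up to the stopping time (your explicit telescoping chain, the safety check via Lemma~\ref{lemma:over_approx_forward_sets}, and your closing remark that the volume-weighted scalar $J_k$ is only a surrogate for the realized index $I$ are careful elaborations of what the paper states informally). One inaccuracy worth fixing: an actuation-channel attack does not necessarily force $u'_{k'}\notin\mathcal{U}$ or $\mathcal{S}^{+}_{k'}\not\subseteq\mathcal{X}_{\eta}$ --- a stealthy perturbation can pass both \textit{SV} checks --- but your sub-case conclusion survives because, as the paper argues, under a pure actuation attack the \textit{TS}'s output is overwritten by the attacker anyway, so the index is determined by Algorithm~\ref{algorithm:set_theoretic_MPC_and_SV} alone regardless.
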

\begin{proof}
	{
	In the worst-case scenario, a cyber-attack may either remain undetected by
	the anomaly detector module~\eqref{eq:anomaly_detector} or corrupt the actuation channel. In these cases, the TS is either not activated or its computed action is invalidated by the
	attack. Hence, the closed-loop behavior is determined by the SV and EC
	modules, and the proposed architecture reduces to the baseline
	safety-preserving architecture without the TS.
	
	Consider now the case in which the attack affects the measurement channel
	and is detected at time $k'$. The TS then uses the last reliable
	measurement and the data-driven reachable-set recursion to compute
	$\hat{\mathcal{R}}_k$, and allows the tracking controller to operate in a
	supervised open-loop mode. This operation is allowed only while
	$\hat{\mathcal{R}}_{k+1}\subseteq\mathcal{X}_{\eta}$ and
	$J_{k+1}\leq J_k$. Therefore, during all time instants in which the TS
	keeps the tracking controller active, the predicted reachable tube remains
	inside the admissible tracking region and the tracking-performance index
	does not increase.
	
	Let $k_{\mathrm{stop}}$ be the first time instant at which one of the TS
	stopping conditions is violated. If $k_{\mathrm{stop}}=0$, the TS
	immediately triggers the EC, and the proposed scheme coincides with the
	baseline SV--EC architecture. If $k_{\mathrm{stop}}>0$, then the TS has
	kept the tracking controller active for $k_{\mathrm{stop}}$ time steps
	while preserving safety and satisfying $J_{k+1}\leq J_k$. When a stopping
	condition is violated, the TS intentionally sends an invalid input
	$u_k\notin\mathcal{U}$, which activates the EC through the SV module.
	Thus, after this time, the behavior again coincides with the baseline
	SV--EC architecture.
	
	Therefore, the TS either recovers the same behavior as the architecture
	without TS, or it safely delays the activation of the EC while the
	tracking-performance index does not deteriorate. Consequently, the
	tracking performance index obtained with the TS is better than or equal to
	the one obtained using only the SV and EC modules.
	}
\end{proof}
\begin{remark} \label{remark:using_MAC}
	The proposed solution has been developed assuming a detection delay $\tau>0.$ However,  the proposed solution preserves safety even if the actual detection delay for some attacks is larger than $\tau$. In such a scenario, the state predictions performed by the tracking supervisor might be incorrect (because starting from corrupted data) and lead to incorrect controller actions. Nevertheless, the presence of the SV module on the plant's side ensures that the plant's safety is always preserved.
	In addition, in some application scenarios, the communication channels between the plant and the controller
	are authenticated, i.e., a Message Authentication Code (MAC) \cite{menezes2018handbook} is
	used to authenticate every data packet sent over the network. Such a security mechanism allows the controller (or plant) to verify the authenticity and integrity of the received sensor measurements (or control actions), hence allowing instantaneous detection of network attacks. In this setup, the proposed solution can be straightforwardly adapted by simply setting $\tau=0.$
\end{remark}
%
\section{Simulation} \label{sec:simulation}
In this section, we consider the industrial Continuous-Stirred Tank Reactor (CSTR) system used in \cite{guan2017distributed} as a testbed to evaluate the proposed approach. {Although the proposed framework is formulated for general finite-dimensional linear systems, the low-dimensional CSTR benchmark is intentionally selected to clearly illustrate the operation of the proposed set-theoretic mechanisms, including reachable sets, ROSC sets, and Voronoi partitions, without resorting to projections.} In this plant, chemical species $S_A$ reacts to form species $S_B,$ the state vector and control inputs are $x_p = [C_A, T]^T$ and $u=[T_C, C_{Ai}]^T$, where $C_A$ is the concentration of $S_A$ in the tank, $T$ the reaction temperature, $T_C$ the cooling medium temperature, and $C_{Ai}$ is input concentration of $S_A.$ A linearized discrete-time representation of the CSTR system using a sampling time $T_s=1\,\sec$ has been provided in \cite{guan2017distributed}, and it is characterized by the system matrices
\begin{equation} \label{eq:simulation_model}
	A = \begin{bmatrix}
		0.9719 & 0.0013 \\
		0.0340 & 0.8628
	\end{bmatrix}, B = \begin{bmatrix}
		-0.0839 & 0.0232 \\
		0.0761 & 0.4144
	\end{bmatrix},
\end{equation}
which we assume to be apriori unknown according to the considered data-driven setup. By considering a bounded disturbance set $\mathcal{W} = \{w: [-0.001, -0.001]^T \leq w \leq [0.001, 0.001]^T\}$ and assuming that the state and input constraints are $-2 \leq T_C \leq 2, -10 \leq C_{Ai} \leq 10, -10\leq C_A \leq 10,  -30 \leq T \leq 30,$ we have simulated the CSTR system for random input perturbation.
We have collected four input-state trajectories of five data samples each, which fulfill the rank condition  \eqref{eq:rank_condition}. Such data have been used to design, according to the scheme in  \cite{de2019formulas}, the networked  data-driven controller \eqref{eq:tracking_controller_logic}.
%
The estimated data-driven model $\mathcal{M}_{AB}$ and controller's parameters are available on the provided GitHub repository.

The emergency controller is configured with a five region Voronoi partition of $\mathcal{X}_{\eta}$ obtained using as generators the equilibrium states $x^1_e = [4,15]^T, x^2_e = [-6,15]^T, x^3_e = [0,0]^T, x^4_e = [6, -20]^T, x^5_e = [-4, -20]^T.$ 

On the other hand, the tracking supervisor is configured to use $\alpha = 1, \beta = 0$ in \eqref{eq:tracking_performance_index}.   Such choice instructs the tracking supervisor to evaluate the tracking performance degradation resulting from activating the emergency controller (the index $I_1(l_i, l_j)$) and not consider the time required to recover tracking when the attack is terminated (the index $I_2(l_i, l_j).$). Moreover, we have used the proposed data-driven anomaly detector introduced in \cite{attar2024safety},  and by simulating the system under different false-data injection attacks, a worst-case detection delay of $\tau=5$ has been obtained.

%
In the performed simulations, $x_0 = [0.01, -0.01]^T $ and the plant is required to track the time-varying reference signal $r_k$ shown in Fig~\ref{fig:trajectories} (see the blue stars), while three different cyber-attacks on the measurement channel occur. 
The first attack, for $60 \leq k \leq 125,$ injects $x^a_k=0.01[k-59,\, k-59]^T$ on the measurement vector $x_k$. The anomaly detector identifies the presence of an attack with a delay at $k=62$ and activates the tracking supervisor ($d_k=1$). Right after the attack has been detected, the tracking supervisor is instructed to assume  $x'_{56}$  as the last valid measurement to compute the predicted robust forward reachable sets (see red regions in Fig.~\ref{fig:trajectories}).
%

The controlled open-loop evolution proceeds until $k=106$, when $\hat{\mathcal{R}}_{107} \not \subseteq \mathcal{X}_{\eta}$ and, the tracking supervisor, for safety reasons, activates the emergency controller (see Step~\ref{step_check} of Algorithm~\ref{algorithm:tracking_supervisor}). As a consequence for $107\leq k <129$, the emergency controller steers the state of the system into the RCI region $\mathcal{T}^1_0$
centered in $x^1_e$, and the reference tracking task is temporarily paused. 
However, when the first attack ends (at $k=110$), the plant recovers its tracking task (see Figs.~\ref{fig:trajectories}-\ref{fig:state_evolutions}). 
In the second scenario, an attack intermittently affects the measurement channel for $200\leq k < 261.$ In this case, the attacker injects $x^a_k=0.08[k-199,\, k-199]^T, x^a_k=0.1[k-239,\, k-239]^T$ during $k\in [200,\, 220]$ and $k\in [240,\, 260]$, respectively. The first attack is detected at $k=202$, and the second attack is detected at $k=241$. However, due to the nature of the attack, the sporadically received measurements enable the tracking supervisor to reset the uncertainty set (yellow sets in Fig.~\ref{fig:trajectories}), thereby preventing the suspension of the tracking task.
In the third attack scenario, the attacker injects $x^a_k = 0.1[k-399, \, k-399]^T$ on the measurements for $400\leq k \leq 420.$ 
In this case, the anomaly detector identifies the presence of an attack at $k=401$ and activates the tracking supervisor accordingly.
Differently from the first attack scenario, in this case, the uncertain predicted forward one-step evolution sets estimated by the tracking supervisor never violate the safety constraints (green sets in Fig.~\ref{fig:trajectories}). Moreover, the index $J$ presents a decreasing behavior, denoting that the tracking performance is improving under the action of the open-loop tracking controller's actions. 
As a consequence, the tracking task is never suspended under the actions of the cyber-attack.  

In Fig.~\ref{fig:state_evolutions} and Table.~\ref{table:comparison}, the proposed solution is contrasted with the one in \cite{attar2024safety} {and the model-based approach, using the nominal system matrices $A$ and $B,$  in \cite{gheitasi2022worst}}. Since in \cite{attar2024safety}, the emergency controller is activated regardless of the nature of the attack, an unavoidable tracking loss occurs in all the three considered attacks with a consequence of larger tracking performance loss. By measuring the tracking error, namely $e_r,$ as $e_r = \sum_{k=1}^{N_s} \frac{||x_k - r_k||}{N_s},$ with $N_s$ the simulation steps, Table~\ref{table:comparison} reports the obtained numerical results. It is possible to appreciate how, compared to \cite{attar2024safety}, the proposed solution reduces the tracking performance degradation due to the presence of cyber-attacks. {Moreover, the model-based approach in  \cite{gheitasi2022worst}  provides a lower tracking error, as expected, since it does not account for data-driven model uncertainty; nevertheless, the proposed data-driven architecture achieves a comparable tracking performance, with the remaining gap mainly attributable to the conservatism of the matrix-zonotope-based reachable-set approximation.}

{
To quantify the conservatism introduced by the data-driven reachable-set approximation, we compared, for each time instant, the volume of the used data-driven one-step robust forward reachable sets with its model-based counterpart  as in \cite{gheitasi2022worst}. 
The average model-based reachable-set volume is
$4.0\times 10^{-6}$, while the average data-driven reachable-set volume is
$7.0483\times 10^{-3}$. This corresponds to an average volume gap of $\Delta_{\mathrm{avg}}=0.7044\%$. This confirms that the data-driven reachable sets are larger, as expected, since they account for both bounded disturbances and the uncertainty in $A$ and $B$ through the matrix zonotope $\mathcal{M}_{AB}$.

To evaluate whether this conservatism leads to unnecessary controller switching, we also counted false-positive activations of the Emergency Controller.
In particular, a false-positive activation is defined 
as 
an instance  
where the data-driven TS condition triggers the emergency mechanism only due to conservatism in the computed one-step forward reachable set.
The obtained results show the presence of only 1 false activation of the EC, corresponding to $0.92\%$ of the evaluated TS decision instants. 
These results show that, although the data-driven reachable sets are more conservative, the practical impact of this conservatism on unnecessary EC activation remains limited in the considered simulation.
}

{
Finally, we have investigated the effect of anomaly-detection delay $\tau$ on the tracking performance of the proposed data-driven architecture.
In particular, we have evaluated the tracking error of the proposed architecture under three possible worst-case detection delays, namely $\tau=0$ (no delay), $\tau=5$ and $\tau=7.$ The obtained results,  reported in
Table~\ref{table:delay_sensitivity}, show that the tracking error remains close for the three cases. Consequently, the tracking error does not increase monotonically with the detection delay and  this is expected because the closed-loop system involves switching between the tracking controller, the TS, and the EC. A smaller detection delay leads to earlier TS/EC intervention, which improves responsiveness to attacks but may also introduce earlier switching and slightly larger tracking deviations. 
%
%
Overall, the results suggest that the proposed data-driven architecture preserves tracking performance with only limited degradation under moderate detection delays.
}
\begin{table}[htbp] 
	\centering
	
	\caption{Tracking Error: Proposed Approach, Model-Based Approach,  \cite{attar2024safety}, No Attack}
	\begin{tabular}[t]{lcccc}
		\hline
		& No attack & Proposed Approach & { \cite{gheitasi2022worst}} & \cite{attar2024safety}\\
		\hline
		$e_r$ & 1.22 & 1.57 & {1.24} & 6.66\\
		\hline
	\end{tabular}
	\label{table:comparison}
\end{table}%
\begin{figure}[htbp]
	\centering
	\includegraphics[width=\columnwidth]{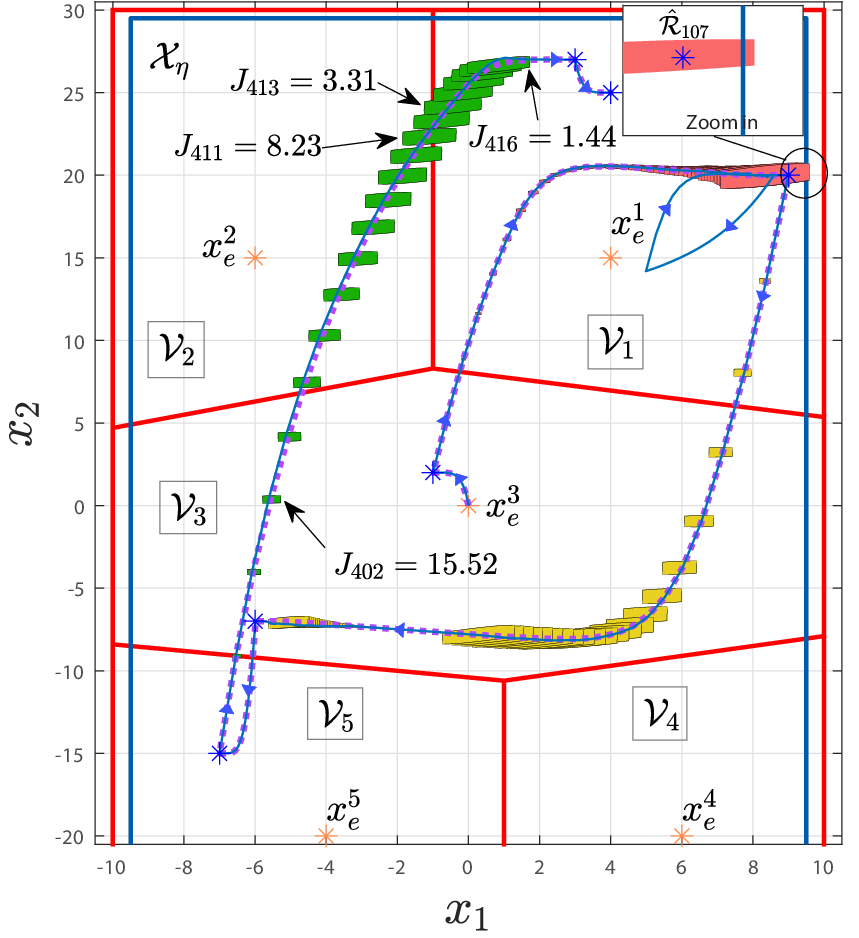}
	\caption{State trajectory: proposed solution with attacks (blue solid
		line) vs trajectory in attack-free scenario (purple dashed line).}
	\label{fig:trajectories}
\end{figure}%
\begin{table}[t]
	\centering
	\caption{Effect of detection delay on the tracking performance of the proposed data-driven architecture.}
	\label{table:delay_sensitivity}
	\begin{tabular}{c c}
		\hline
		Detection delay $\tau$ & Tracking error \\
		\hline
		0 & 1.582064 \\
		5 & 1.576347 \\
		7 & 1.578430 \\
		\hline
	\end{tabular}
\end{table}
\begin{figure}[htbp]
	\centering	\includegraphics[width=\columnwidth]{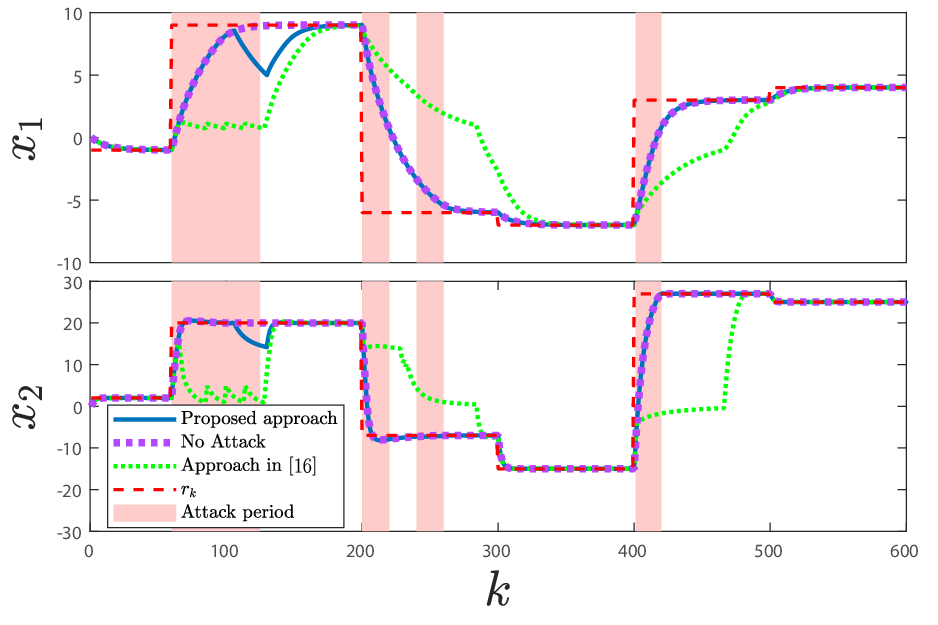}
	\caption{State evolution: no attack, proposed approach, \cite{attar2024safety}.}
	\label{fig:state_evolutions}
\end{figure}%
%
\section{Conclusions} \label{sec:conclusions}

This paper proposed a data-driven robust solution to the safety and reference tracking control problems for constrained CPSs by leveraging robust reachability arguments. The proposed control architecture included two data-driven add-on modules (local to the plant and to the networked controller) designed to ensure safety while enhancing, whenever possible, the tracking performance under cyber-attacks in a supervised manner. Theoretical and simulation results have been reported to show the effectiveness of the proposed scheme.  Future studies will be devoted to extending the proposed scheme to particular classes of nonlinear systems.

%
\bibliographystyle{unsrt}
\bibliography{bibliography}
%
%
%

\end{document}